\DeclareMathAlphabet\mathbfcal{OMS}{cmsy}{b}{n}
\definecolor{thedarkblue}{RGB}{0,0,120} 
\definecolor{mydarkblue}{rgb}{0,0.08,0.45} 
\definecolor{darkblue}{rgb}{0,0.08,180}
\colorlet{TufteRed}{red!80!black}
\definecolor{theblue}{RGB}{0,0,180}
\colorlet{thered}{TufteRed}
\newcolumntype{H}{>{\setbox0=\hbox\bgroup}c<{\egroup}@{}}
\newcolumntype{R}[1]{>{\RaggedLeft\arraybackslash}} 
\newcolumntype{L}[1]{>{\RaggedRight\arraybackslash}}
\newcommand{\abs}[1]{\left|#1\right|}
\newcommand{\eg}{\emph{e.g.}}
\newcommand{\ie}{\emph{i.e.}}
\newtheorem{theorem}{\bfseries{Theorem}}
\newtheorem{corollary}{\bfseries{Corollary}}
\newtheorem*{note}{\hspace{-1em}\textrm{Note}}
\newtheorem{lemma}{\bfseries{Lemma}}
\newtheorem{Definition}{\hspace{-1em}\bfseries{Definition}}
\providecommand{\tensor}[1]{\boldsymbol{\mathcal{#1}}}
\providecommand{\mat}[1]{\boldsymbol{\mathrm{#1}}}
\renewcommand{\vec}[1]{\boldsymbol{\mathrm{#1}}}
\DeclareMathOperator*{\argmin}{argmin}
\DeclareMathOperator{\hugeE}{\mbox{\huge\raise-0.3ex\hbox{E}}}
\DeclareMathOperator{\p}{\mathbb{P}}
\DeclareMathOperator{\hugep}{\mbox{\huge\raise-0.3ex\hbox{$\p$}}}
\newcommand{\RR}{\mathbb{R}}
\providecommand{\eye}{\mat{I}}
\providecommand{\mD}{\ensuremath{\mat{D}}}
\providecommand{\mI}{\ensuremath{\mat{I}}}
\providecommand{\mL}{\ensuremath{\mat{L}}}
\providecommand{\mW}{\ensuremath{\mat{W}}}
\providecommand{\mY}{\ensuremath{\mat{Y}}}
\providecommand{\mZ}{\ensuremath{\mat{Z}}}
\providecommand{\tX}{\ensuremath{\tensor{X}}}
\providecommand{\vv}{\ensuremath{\vec{v}}}
\providecommand{\vx}{\ensuremath{\vec{x}}}
\providecommand{\vz}{\ensuremath{\vec{z}}}
\DeclareMathOperator{\cut}{cut}
\DeclareMathOperator{\vol}{vol}
\begin{document}

\title{Higher-order Spectral Clustering for Heterogeneous Graphs}

\author{Aldo G. Carranza}
\affiliation{
\institution{Stanford University}
}

\author{Ryan A. Rossi}
\orcid{1234-5678-9012-3456}
\affiliation{
\institution{Adobe Research}
}

\author{Anup Rao}
\affiliation{
\institution{Adobe Research}
}

\author{Eunyee Koh}
\affiliation{
\institution{Adobe Research}
}
\email{} 

\renewcommand\shortauthors{Carranza, A.G. et al.}

\begin{abstract}
Higher-order connectivity patterns such as small induced sub-graphs called graphlets (network motifs) are vital to understand the important components (modules/functional units) governing the configuration and behavior of complex networks.
Existing work in higher-order clustering has focused on simple \emph{homogeneous graphs} with a single node/edge type.
However, heterogeneous graphs consisting of nodes and edges of different types are seemingly ubiquitous in the real-world.
In this work, we introduce the notion of typed-graphlet that explicitly captures the rich (typed) connectivity patterns in heterogeneous networks.
Using typed-graphlets as a basis, we develop a general principled framework for higher-order clustering in heterogeneous networks.
The framework provides mathematical guarantees on the optimality of the higher-order clustering obtained.
The experiments demonstrate the effectiveness of the framework quantitatively for three important applications including
(i) clustering,
(ii) link prediction, and 
(iii) graph compression.
In particular, the approach achieves a mean improvement of 43x over all methods and graphs for clustering while achieving a $18.7\%$ and $20.8\%$ improvement for link prediction and graph compression, respectively.
\end{abstract}

\begin{CCSXML}
<ccs2012>
<concept>
<concept_id>10010147.10010178</concept_id>
<concept_desc>Computing methodologies~Artificial intelligence</concept_desc>
<concept_significance>500</concept_significance>
</concept>
<concept>
<concept_id>10010147.10010257</concept_id>
<concept_desc>Computing methodologies~Machine learning</concept_desc>
<concept_significance>500</concept_significance>
</concept>
<concept>
<concept_id>10002950.10003624.10003633.10010917</concept_id>
<concept_desc>Mathematics of computing~Graph algorithms</concept_desc>
<concept_significance>500</concept_significance>
</concept>
<concept>
<concept_id>10002950.10003624.10003633.10010918</concept_id>
<concept_desc>Mathematics of computing~Approximation algorithms</concept_desc>
<concept_significance>500</concept_significance>
</concept>
<concept>
<concept_id>10002950.10003624.10003625</concept_id>
<concept_desc>Mathematics of computing~Combinatorics</concept_desc>
<concept_significance>500</concept_significance>
</concept>
<concept>
<concept_id>10002950.10003624.10003633</concept_id>
<concept_desc>Mathematics of computing~Graph theory</concept_desc>
<concept_significance>500</concept_significance>
</concept>
<concept>
<concept_id>10002951.10003227.10003351</concept_id>
<concept_desc>Information systems~Data mining</concept_desc>
<concept_significance>500</concept_significance>
</concept>
<concept>
<concept_id>10003752.10003809.10003635</concept_id>
<concept_desc>Theory of computation~Graph algorithms analysis</concept_desc>
<concept_significance>500</concept_significance>
</concept>
<concept>
<concept_id>10010147.10010257.10010293.10010297</concept_id>
<concept_desc>Computing methodologies~Logical and relational learning</concept_desc>
<concept_significance>500</concept_significance>
</concept>
</ccs2012>
\end{CCSXML}

\ccsdesc[500]{Computing methodologies~Artificial intelligence}
\ccsdesc[500]{Computing methodologies~Machine learning}
\ccsdesc[500]{Mathematics of computing~Graph algorithms}
\ccsdesc[500]{Mathematics of computing~Approximation algorithms}
\ccsdesc[500]{Mathematics of computing~Graph theory}
\ccsdesc[500]{Information systems~Data mining}
\ccsdesc[500]{Theory of computation~Graph algorithms analysis}
\ccsdesc[500]{Computing methodologies~Logical and relational learning}

\keywords{
Clustering,
higher-order clustering,
heterogeneous networks, 
typed graphlets,
network motifs, 
spectral clustering, 
node embedding,
graph mining
}

\maketitle

\section{Introduction} \label{sec:intro}
Clustering in graphs has been one of the most fundamental tools for analyzing and understanding the components of complex networks.
It has been used extensively in many important applications to distributed systems~\cite{hendrickson1995improved, simon1991partitioning, van1995improved}, compression~\cite{rossi2015pmc-sisc, buehrer2008scalable}, image segmentation~\cite{shi2000normalized, felzenszwalb2004efficient}, document and word clustering \cite{dhillon2001co}, among others.
Most clustering methods focus on simple flat/\emph{homogeneous} graphs where nodes and edges represent a single entity and relationship type, respectively. 
However, heterogeneous graphs consisting of nodes and edges of different types are seemingly ubiquitous in the real-world.
In fact, most real-world systems give rise to rich heterogeneous networks that consist of multiple types of diversely interdependent entities \cite{Shi2017, sun2013mining}. 
This heterogeneity of real systems is often due to the fact that, in applications, data usually contains semantic information. 
For example in research publication networks, nodes can represent authors, papers, or venues and edges can represent coauthorships, references, or journal/conference appearances.
Such heterogeneous graph data can be represented by an arbitrary number of matrices and tensors that are coupled with respect to one or more types as shown in Figure~\ref{fig:coupled-matrix-tensor}.

Clusters in heterogeneous graphs that contain multiple types of nodes give rise to communities that are significantly more complex. 
Joint analysis of multiple graphs may capture fine-grained clusters that would not be captured by clustering each graph individually as shown in~\cite{banerjee2007multi,pcmf-dsaa}.
For instance, simultaneously clustering different types of entities/nodes in the heterogeneous graph based on multiple relations where each relation is represented as a matrix or a tensor (Figure~\ref{fig:coupled-matrix-tensor}).
It is due to this complexity and the importance of explicitly modeling how those entity types mix to form complex communities that make the problem of heterogeneous graph clustering a lot more challenging.
Moreover, the complexity, representation, and modeling of the heterogeneous graph data itself also makes this problem challenging (See Figure~\ref{fig:coupled-matrix-tensor}).
Extensions of clustering methods for homogeneous graphs to heterogeneous graphs are often nontrivial.
Many methods require complex schemas and are very specialized, allowing for two graphs with particular structure.
Furthermore, most clustering methods only consider first order structures in graphs, \ie, edge connectivity information.
However, higher-order structures play a non-negligible role in the organization of a network. 

Higher-order connectivity patterns such as small induced subgraphs called graphlets (network motifs) are known to be the fundamental building blocks of simple homogeneous networks~\cite{Milo2002} and are essential for modeling and understanding the fundamental components of these networks~\cite{pgd,pgd-kais,benson2016higher}.
However, such (untyped) graphlets are \emph{unable} to capture the rich (typed) connectivity patterns in more complex networks such as those that are heterogeneous, labeled, signed, or attributed.
In heterogeneous graphs (Figure~\ref{fig:coupled-matrix-tensor}), nodes and edges can be of different types and explicitly modeling such types is crucial.
In this work, we introduce the notion of a typed-graphlet and use it to uncover the higher-order organization of rich heterogeneous networks.
The notion of a typed-graphlet captures both the connectivity pattern of interest and the types.
We argue that typed-graphlets are the fundamental \emph{building blocks of heterogeneous networks}.
Note homogeneous, labeled, signed, and attributed graphs are all special cases of heterogeneous graphs as shown in Section~\ref{sec:framework}.

In this paper, we propose a general framework for \emph{higher-order clustering in heterogeneous graphs}. 
The framework explicitly incorporates heterogeneous higher-order information by counting typed graphlets that explicitly capture node and edge types.
Typed graphlets generalize the notion of graphlets to rich heterogeneous networks as they explicitly capture the higher-order typed connectivity patterns in such networks.
Using these as a basis, we propose the notion of \emph{typed-graphlet conductance} that generalizes the traditional conductance to higher-order structures in heterogeneous graphs.
The proposed approach reveals the higher-order organization and composition of rich heterogeneous complex networks. 
Given a graph and a typed-graphlet of interest $H$, the framework forms the weighted typed-graphlet adjacency matrix $\mW_{G^H}$ by counting the frequency that two nodes co-occur in an instance of the typed-graphlet. Next, the typed-graphlet Laplacian matrix is formed from $\mW_{G^H}$ and the eigenvector corresponding to the second smallest eigenvalue is computed. 
The components of the eigenvector provide an ordering $\sigma$ of the nodes that produce nested sets $S_k = \{\sigma_1, \sigma_2,\ldots, \sigma_k\}$ of increasing size $k$. 
We demonstrate theoretically that $S_k$ with the minimum typed-graphlet conductance is a near-optimal higher-order cluster.

The framework provides mathematical guarantees on the optimality of the higher-order clustering obtained.
The theoretical results extend to typed graphlets of arbitrary size and avoids restrictive special cases required in prior work.
Specifically, we prove a Cheeger-like inequality for \emph{typed-graphlet conductance}.
This gives bounds $OPT\le APPX\le C\sqrt{OPT}$ where $OPT$ is the minimum typed-graphlet conductance, $APPX$ is the value given by Algorithm \ref{algo:spectral}, and $C$ is a constant---at least as small as $\sqrt{OPT}$ which depends on the number of edges in the chosen typed graphlet.
Notably, the bounds of the method depend directly on the number of edges of the arbitrarily chosen typed graphlet (as opposed to the number of nodes) and inversely on the quality of connectivity of occurrences of the typed graphlet in a heterogeneous graph.
This is notable as the formulation for homogeneous graphs and untyped graphlets proposed in~\cite{benson2016higher} is in terms of nodes and requires different theory for untyped-graphlets with a different amount of nodes (\eg, untyped graphlets with 3 nodes vs. 4 nodes and so on).
In this work, we argue that it is not the number of nodes in a graphlet that are important, but the number of edges.
This leads to a more powerful, simpler, and general framework that can serve as a basis for analyzing higher-order spectral methods.
Furthermore, even in the case of untyped graphlets and homogeneous graphs, the formulation in this work leads to tighter bounds for certain untyped graphlets.
Consider a 4-node star and 3-node clique (triangle), both have 3 edges, and therefore would have the same bounds in our framework even though the number of nodes differ.
However, in~\cite{benson2016higher}, the bounds for the 4-node star would be different (and larger) than the 3-node clique.
This makes the proposed formulation and corresponding bounds more general and in the above case provides tighter bounds compared to~\cite{benson2016higher}.

The experiments demonstrate the effectiveness of the approach quantitatively for three important tasks.
First, we demonstrate the approach for revealing high quality clusters across a wide variety of graphs from different domains.
In all cases, it outperforms a number of state-of-the-art methods with an overall improvement of $43$x over all graphs and methods.
Second, we investigate the approach for link prediction.
In this task, we derive higher-order typed-graphlet node embeddings (as opposed to clustering) and use these embeddings to learn a predictive model.
Compared to state-of-the-art methods, the approach achieves an overall improvement in $F_1$ and AUC of 18.7\% and 14.4\%, respectively.
Finally, we also demonstrate the effectiveness of the approach quantitatively for graph compression where it is shown to achieve a mean improvement of $20.8\%$ across all graphs and methods.  
Notably, these application tasks all leverage different aspects of the proposed framework. For instance, link prediction uses the higher-order node embeddings given by our approach whereas graph compression leverages the proposed typed-graphlet spectral ordering (Definition~\ref{def:typed-graphlet-spectral-ordering}).

The paper is organized as follows. 
Section~\ref{sec:framework} describes the general framework for higher-order spectral clustering 
whereas Section~\ref{sec:theory} proves a number of important results including mathematical guarantees on the optimality of the higher-order clustering.
Next, Section \ref{sec:exp} demonstrate the effectiveness of the approach quantitatively for a variety of important applications including clustering, link prediction, and graph compression. 
Section \ref{sec:related-work} discusses and summarizes related work.
Finally, Section \ref{sec:conc} concludes.

\begin{figure}[t!]
\centering
\includegraphics[width=0.6\linewidth]{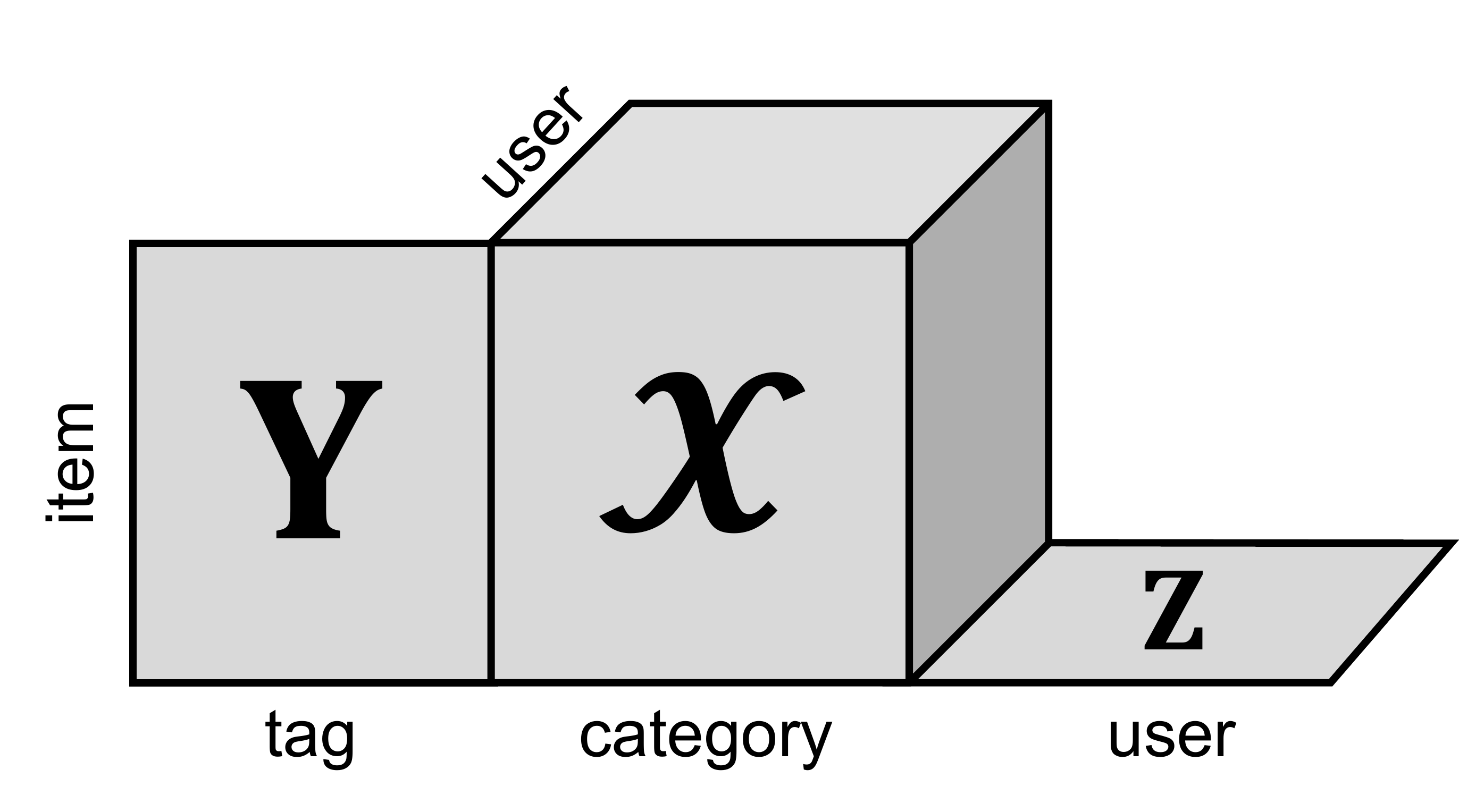}
\vspace{-2mm}
\caption{Heterogeneous graph represented as a third-order tensor and two matrices that all share at least one type. The third-order tensor $\tX$ can be coupled with the item by tag matrix $\mY$ and the social network (user by user) matrix $\mZ$.}
\label{fig:coupled-matrix-tensor}
\end{figure}

\section{Framework} \label{sec:framework}
In this work, we propose a general framework for higher-order clustering in heterogeneous graphs.
Table \ref{table:notation} lists all our notation.

\subsection{Heterogeneous Graph Model} \label{sec:heterogeneous-graph-model}
We represent a heterogeneous complex system using the following heterogeneous graph model.

\begin{Definition}[\sc Heterogeneous Graph] \label{def:hetero-graph}
A heterogeneous graph is an ordered tuple $G=(V,E,\psi,\xi)$ comprised of
\begin{enumerate}
\item a graph $(V,E)$ where $V$ is the node set and $E$ is the edge set,
\item a mapping $\psi:V\rightarrow\mathcal{T}_V$ referred to as the node-type mapping where $\mathcal{T}_V$ is a set of node types,
\item a mapping $\xi:E\rightarrow\mathcal{T}_E$ referred to as the edge-type mapping where $\mathcal{T}_E$ is a set of edge types.
\end{enumerate}
We denote the node set of a heterogeneous graph $G$ as $V(G)$ and its edge set as $E(G)$.
\end{Definition}\noindent

A homogeneous graph can be seen as a special case of a heterogeneous graph where $|\mathcal{T}_V|=|\mathcal{T}_E|=1$. 
Note that a heterogeneous graph may be unweighted or weighted and it may be undirected or directed, depending on the underlying graph structure.
Moreover, it may also be signed or labeled $Y = \{y_1, y_2, \ldots\}$ where $y_i$ corresponds to a label assigned to node $v_i$ (or edge $e_i$).

In general, a heterogeneous network can be represented by an arbitrary number of matrices and tensors that are coupled, \ie, the tensors and matrices share at least one type with each other~\cite{acar2011all,rossi16collective-factor}.
See Figure~\ref{fig:coupled-matrix-tensor} for an example of a heterogeneous network represented as a coupled matrix-tensor.

\subsection{Graphlets}\label{sec:graphlets}
Graphlets are small connected induced subgraphs~\cite{Przulj2004,pgd}. 
The simplest nontrivial graphlet is the 1st-order structure of a node pair connected by an edge. 
Higher-order graphlets correspond to graphlets with greater number of nodes and edges.
Most graph clustering algorithms only take into account edge connectivity, 1st-order graphlet structure, when determining clusters. 
Moreover, these methods are only applicable for homogeneous graphs.
For example, spectral clustering on the normalized Laplacian of the adjacency matrix of a graph partitions it in a way that attempts to minimize the amount of edges, 1st-order structures, cut \cite{Kannan2004}.

In this section, we introduce a more general notion of graphlet called \emph{typed-graphlet} that naturally extends to both homogeneous and heterogeneous networks.
In this paper, we will use $G$ to represent a graph and $H$ or $F$ to represent graphlets.

\subsubsection{Untyped graphlets} \label{sec:untyped-graphlets}
We begin by defining graphlets for homogeneous graphs with a single type.

\begin{Definition}[\sc Untyped Graphlet]\label{def:graphlet}
An untyped graphlet of a homogeneous graph $G$ is a connected, induced subgraph of $G$.
\end{Definition}

Given an untyped graphlet in some homogeneous graph, it may be the case that we can find other topologically identical ``appearances" of this structure in that graph. 
We call these appearances \emph{untyped-graphlet instances}.

\begin{Definition}[\sc Untyped-Graphlet Instance]\label{def:graphlet-instance}
An instance of an untyped graphlet $H$ in homogeneous graph $G$ is an untyped graphlet $F$ in $G$ that is isomorphic to $H$.
\end{Definition}

As we shall soon see, it will be important to refer to the set of all instances of a given graphlet in a graph. 
Forming this set is equivalent to determining the subgraphs of a graph isomorphic to the given graphlet. 
Nevertheless, we usually only consider graphlets with up to four or five nodes, and have fast methods for discovering instances of such graphlets~\cite{pgd, pgd-kais, ahmed16bigmine, ahmed16bigdata, rossi17graphlet-est}.

\subsubsection{Typed graphlets} \label{sec:typed-graphlets}
In heterogeneous graphs, nodes and edges can be of different types and so explicitly (and jointly) modeling such types is essential (Figure~\ref{fig:coupled-matrix-tensor}).
To generalize higher-order clustering to handle such networks, we introduce the notion of a typed-graphlet that explicitly captures both the connectivity pattern of interest and the types.
Notice that typed-graphlets are a generalization of untyped-graphlets and thus are a more powerful representation.

\begin{Definition}[\sc Typed Graphlet]\label{def:typed-graphlet}
A typed graphlet of a heterogeneous graph $G=(V,E,\psi,\xi)$ is a connected induced heterogeneous subgraph $H=(V',E',\psi',\xi')$ of $G$ in the following sense:
\begin{enumerate}
\item $(V',E')$ is an untyped graphlet of $(V,E)$,

\item $\psi'=\psi|_{V'}$, that is, $\psi'$ is the restriction of $\psi$ to $V'$

\item $\xi'=\xi|_{E'}$, that is, $\xi'$ is the restriction of $\xi$ to $E'$.
\end{enumerate}
\end{Definition}

We can consider the topologically identical ``appearances" of a typed graphlet in a graph that preserve the type structure.

\begin{Definition}[\sc Typed-Graphlet Instance]\label{def:typed-graphlet-instance}
An instance of a typed graphlet $H=(V',E',\psi',\xi')$ of heterogeneous graph $G$ is a typed graphlet $F=(V'',E'',\psi'',\xi'')$ of $G$ such that:
\begin{enumerate}
\item $(V'',E'')$ is isomorphic to $(V',E')$, 
\item $\mathcal{T}_{V''}=\mathcal{T}_{V'}$ and $\mathcal{T}_{E''}=\mathcal{T}_{E'}$, that is, the sets of node and edge types are correspondingly equal.
\end{enumerate}
The set of unique typed-graphlet instances of $H$ in $G$ is denoted as $I_G(H)$.
\end{Definition}

Note that we are not interested in preserving the type structure via the isomorphism, only its existence, that is, we are not imposing the condition that the node and edge types coincide via the graph isomorphism. This condition is too restrictive.

\subsubsection{Motifs} \label{sec:motifs}
Before we proceed, we briefly address some discrepancies between our definition of graphlets and that of papers such as \cite{benson2016higher,Arenas2005}. Although it might be a simple matter of semantics, the differences should be noted and clarified to avoid confusion. Some papers refer to what we refer to graphlets as \emph{motifs}. Yet, motifs usually refer to \emph{recurrent} and \emph{statistically significant} induced subgraphs ~\cite{Przulj2004, Milo2002}.

To find the motifs of a graph, one must compare the frequency of appearances of a graphlet in the graph to the expected frequency of appearances in an ensemble of random graphs in a null model associated to the underlying graph. 
Current techniques for computing the expected frequency in a null model requires us to generate a graph that follows the null distribution and then compute the graphlet frequencies in this sample graph \cite{Milo2002, Albert2002}. These tasks are computationally expensive for large networks as we have to sample many graphs from the null distribution. 
On the other hand, any graphlet can be arbitrarily specified in a graph and does not depend on being able to determine whether it is is statistically significant. In any case, a motif is a special type of graphlet, so we prefer to work with this more general object.

\begin{table}[t!]
\caption{Summary of notation. Matrices are bold, upright roman letters.}
\vspace{-2mm}
\scalebox{0.9}{
\centering 
\fontsize{8}{8.5}\selectfont
\setlength{\tabcolsep}{6pt} 
\label{table:notation}
\hspace*{-2.5mm}
\def\arraystretch{1.38}
\begin{tabularx}{1.10\linewidth}
{@{}r X@{}} 
\toprule

$G$ & graph \\ 
$V(G)$ & node set of $G$ \\
$E(G)$ & edge set of $G$ \\
$H,F$ & graphlet of $G$ \\
$I_G(H)$ & set of unique instances of $H$ in $G$ \\
$\mW_{G^H}$ & typed-graphlet adjacency matrix of $G$ based on $H$ \\
$\mL_{G^H}$ & typed-graphlet normalized Laplacian of $G$ based on $H$ \\
$G^H$ & weighted heterogeneous graph induced by $\mW_G^H$ \\
$S$ & subset of $V(G)$ \\
$(S,\bar{S})$ & cut of $G$ where $\bar{S}=V(G)\backslash S$ \\
$\deg_G(v)$ & degree of node $v\in V(G)$ \\
$\deg_G^H(v)$ & typed-graphlet degree of node $v\in V(G)$ based on $H$\\
$\vol_G(S)$ & volume of $S$ under $G$ \\
$\vol_G^H(S)$ & typed-graphlet volume of $S$ based on $H$ under $G$ \\
$\cut_G(S,\bar{S})$ & cut size of $(S,\bar{S})$ under $G$ \\
$\cut_G^H(S,\bar{S})$ & typed-graphlet cut size of $(S,\bar{S})$ based on $H$ under $G$ \\
$\phi_G(S,\bar{S})$ & conductance of $(S,\bar{S})$ under $G$ \\
$\phi_G^H(S,\bar{S})$ & typed-graphlet conductance of $(S,\bar{S})$ based on $H$ under $G$ \\
$\phi(G)$ & conductance of $G$ \\
$\phi^H(G)$ & typed-graphlet conductance of $G$ based on $H$ \\
\bottomrule
\end{tabularx}
}
\end{table}

\subsection{Typed-Graphlet Conductance} \label{sec:conductance}
In this section, we introduce the measure that will score the quality of a heterogeneous graph clustering built from typed graphlets. 
It is extended from the notion of conductance defined as:
\begin{align*}
\phi(S,\bar{S})=\frac{\cut(S,\bar{S})}{\min\left(\vol(S),\vol(\bar{S})\right)}
\end{align*}
where $(S,\bar{S})$ is a cut of a graph, $\cut(S,\bar{S})$ is the number of edges crossing cut $(S,\bar{S})$ and $\vol(S)$ is the total degrees of the vertices in cluster $S$ \cite{Fortunato2010, Kannan2004}. 
Note that its minimization achieves the sparsest balanced cut in terms of the total degree of a cluster.

The following definitions apply for a fixed heterogeneous graph and typed graphlet. 
Assume we have a heterogeneous graph $G$ and a typed graphlet $H$ of $G$.

\begin{note}
We denote the set of unique instances of $H$ in $G$ as $I_G(H)$.
\end{note}

\begin{Definition}[\sc Typed-Graphlet Degree]
The typed-graphlet degree based on $H$ of a node $v\in V(G)$ is the total number of incident edges to $v$ over all unique instances of $H$. We denote and compute this as $$\deg_G^H(v)=\sum_{F\in I_G(H)}\left|\{e\in E(F)\ |\ v\in e\}\right|.$$
\end{Definition}

\begin{Definition}[\sc Typed-Graphlet Volume]\label{def:typed-graphlet-volume}
The typed-graphlet volume based on $H$ of a subset of nodes $S\subset V(G)$ is the total number of incident edges to any node in $S$ over all instances of $H$. In other words, it is the sum of the typed-graphlet degrees based on $H$ over all nodes in $S$. We denote and compute this as
$$\vol_G^H(S)=\sum_{v\in S}\deg_G^H(v).$$
\end{Definition}

Recall that a cut in a graph $G$ is a partition of the underlying node set $V(G)$ into two proper, nonempty subsets $S$ and $\bar{S}$ where $\bar{S}=V(G)\backslash S$. We denote such a cut as an ordered pair $(S,\bar{S})$. For any given cut in a graph, we can define a notion of cut size.

\begin{Definition}[\sc Typed-Graphlet Cut Size]\label{def:typed-graphlet-cut-size}
The typed-graphlet cut size based on $H$ of a cut $(S,\bar{S})$ in $G$ is the number of unique instances of $H$ crossing the cut. We denote and compute this as $$\cut_G^{H}(S,\bar{S})=\left|\{F\in I_G(H)\ |\ V(F)\cap S\neq\varnothing, V(F)\cap\bar{S}\neq\varnothing\}\right|.$$
\end{Definition}

Note that a graphlet can cross a cut with any of its edges. 
Therefore, it has more ways in which it can add to the cut size than just an edge.

Having extended notions of volume and cut size for higher-order typed substructures, we can naturally introduce a corresponding notion of conductance.

\begin{Definition}[\sc Typed-Graphlet Conductance]\label{def:typed-graphlet-conductance}
The typed-graphlet conductance based on $H$ of a cut $(S,\bar{S})$ in $G$ is
$$\phi_G^H(S,\bar{S})=\frac{\cut_G^H(S,\bar{S})}{\min\big(\vol_G^H(S),\vol_G^H(\bar{S})\big)},$$
and the typed-graphlet conductance based on $H$ of $G$ is defined to be the minimum typed-graphlet conductance based on $H$ over all possible cuts in $G$:
\begin{equation} \label{eq:typed-graphlet-conductance}
\phi^H(G)=\min_{S\subset V(G)}\phi_G^H(S,\bar{S}).
\end{equation}
\end{Definition}\noindent

The cut which achieves the minimal typed-graphlet conductance corresponds to the cut that minimizes the amount of times instances of $H$ are cut and still achieves a balanced partition in terms of instances of $H$ in the clusters.

\subsection{Typed-Graphlet Laplacian}
In this section, we introduce a notion of a higher-order Laplacian of a graph. Assume we have a heterogeneous graph $G$ and a typed graphlet $H$ of $G$.

\subsubsection{Typed-graphlet adjacency matrix}
Suppose we have the set $I_G(H)$. 
Then, we can form a matrix that has the same dimensions as the adjacency matrix of $G$ and has its entries defined by the count of unique instances of $H$ containing edges in $G$. 

\begin{Definition}[\sc Typed-Graphlet Adjacency Matrix]\label{def:typed-graphlet-adjacency-matrix}
Suppose that $V(G)=\{v_1,\dots,v_n\}$. The typed-graphlet adjacency matrix $\mW_{G^H}$ of $G$ based on $H$ is a \emph{weighted matrix} defined by
$$(\mW_{G^H})_{ij}=\sum_{F\in I_G(H)}\mathbf{1}\big(\{v_i,v_j\}\in E(F)\big)$$
for $i,j=1,\dots,n$. That is, the $ij$-entry of \, $\mW_{G^H}$ is equal to the number of unique instances of $H$ that contain nodes $\{v_i,v_j\}\subset V(G)$ as an edge.
\end{Definition}

Having defined $\mW_{G^H}$, a weighted adjacency matrix on the set of nodes $V(G)$, we can induce a weighted graph. We refer to this graph as the graph induced by $\mW_{G^H}$ and denote it as $G^H$.

\begin{note}
From the definition of $\mW_{G^H}$, we can easily show that $E(F)\subset E(G^H)$ for any $F\in I_G(H)$.
\end{note}

\subsubsection{Typed-graphlet Laplacian}
We can construct the weighted normalized Laplacian of $W_{G^H}$:
$$\mL_{G^H}^{\ }=\mI-\mD_{G^H}^{-1/2}\mW_{G^H}^{\ }\mD_{G^H}^{-1/2}$$ where $\mD_{G^H}$ is defined by $$(\mD_{G^H})_{ii}=\sum_j(\mW_{G^H})_{ij}$$
for $i=1,\dots,n$. We also refer to this Laplacian as the \emph{typed-graphlet normalized Laplacian} based on $H$ of $G$. The normalized typed-graphlet Laplacian is the fundamental structure for the method we present in Section \ref{sec:typed-graphlet-spectral-clustering}.

\subsection{Typed-Graphlet Spectral Clustering} \label{sec:typed-graphlet-spectral-clustering}
In this section, we present an algorithm for approximating the optimal solution to the minimum typed-graphlet conductance optimization problem:
\begin{align}\label{eq:graphlet-conductance-opt}
S_\text{best}=\argmin_{S\subset V(G)}\phi_G^H(S,\bar{S})
\end{align}

Minimizing the typed-graphlet conductance encapsulates what we want: the solution achieves a bipartition of $G$ that minimizes the number of instances of $H$ that are cut and is balanced in terms of the total graphlet degree contribution of all instances of $H$ on each partition.

The issue is that minimizing typed-graphlet conductance is $NP$-hard. To see this, consider the case where your graphlet is the 1st-order graphlet, that is, a pair of nodes connected by an edge. Minimizing the standard notion of conductance, which is known to be NP-hard ~\cite{Cook1971}, reduces to minimizing this special case of 1st-order untyped-graphlet conductance minimization. Therefore, obtaining the best graphlet-preserving clustering for large graphs is an intractible problem. We can only hope to achieve a near-optimal approximation.

\subsubsection{Algorithm}
We present a typed-graphlet spectral clustering algorithm for finding a provably near-optimal bipartition in Algorithm~\ref{algo:spectral}. 
We build a sweeping cluster in a greedy manner according to the typed-graphlet spectral ordering defined as follows.

\begin{Definition}[\sc Typed-Graphlet Spectral Ordering] \label{def:typed-graphlet-spectral-ordering}
Let $v$ denote the eigenvector corresponding to the $2$nd smallest eigenvalue of the normalized typed-graphlet Laplacian $\mL_{G^H}$. The typed-graphlet spectral ordering is the permutation
\begin{equation*}
\sigma = (i_1, i_2, \dots, i_n)
\end{equation*}
of coordinate indices $(1,2,\dots, n)$ such that
\begin{equation*}
v_{i_1} \le v_{i_2}\le \dots\le v_{i_n},
\end{equation*}
that is, $\sigma$ is the permutation of coordinate indices of $v$ that sorts the corresponding coordinate values from smallest to largest.
\end{Definition}\noindent

\begin{algorithm} 
\DontPrintSemicolon
\caption{Typed-Graphlet Spectral Clustering}
\SetAlgoLined
\KwIn{Heterogeneous graph $G$, typed graphlet $H$} 
\KwOut{Near-optimal cluster}
$\mW_{G^H} \gets \text{typed-graphlet adjacency matrix of } G \text{ based on } H$

$N\gets \text{number of connected components of } G^H$

$\phi_{\min} \gets \infty$

$S_{\text{best}}\gets \text{initialize space for best cluster}$

\For{$i\gets 1\textbf{ to } N$}{
$\mW \gets \text{submatrix of }\mW_{G^H}\text{ on connected component } i$

$\mL\gets\text{typed-graphlet normalized Laplacian of }W$

$\vv_2 \gets \text{eigenvector of }\mL\text{ with 2nd smallest eigenvalue}$

$\sigma \gets \text{argsort}(\vv_2)$

$\phi \gets \min_k\phi_{G^H}(S_k,\bar{S}_k),\text{ where }S_k=\{\sigma_1,\dots,\sigma_k\}$

\If{$\phi < \phi_{\min}$}{
$\phi_{\min}\gets \phi$

$S\gets \argmin_k\phi_{G^H}(S_k,\bar{S}_k)$

\eIf{$|S|<|\bar{S}|$}{
$S_\text{best}\gets S$
}{
$S_\text{best}\gets\bar{S}$
}   
}
}
\Return{$S_{\text{best}}$}
\label{algo:spectral}
\end{algorithm}

\subsubsection{Extensions}
Algorithm \ref{algo:spectral} generalizes the spectral clustering method for standard conductance minimization~\cite{shi2000normalized} and untyped-graphlet conductance minimization.
We demonstrated the reduction of standard conductance minimization above. 
Untyped-graphlet conductance minimization is also generalized since homogeneous graphs can be seen as heterogeneous graphs with a single node and edge type.
It is straightforward to adapt the framework to other arbitrary (sparse) cut functions such as ratio cuts~\cite{gaertler2005clustering}, normalized cuts~\cite{shi2000normalized}, 
bisectors~\cite{gaertler2005clustering}, normalized association cuts~\cite{shi2000normalized}, among others~\cite{Schaeffer2007,Fortunato2010,shi2000normalized}.

Multiple clusters can be found through simple recursive bipartitioning~\cite{Kannan2004}.
We could also embed the lower $k$ eigenvectors of the normalized typed-graphlet Laplacian into a lower dimensional Euclidean space and perform $k$-means, or any other Euclidean clustering algorithm, then associate to each node its corresponding cluster in this space \cite{Ng2002, Kannan2004}. 
It is also straightforward to use multiple typed-graphlets for clustering or embeddings as opposed to using only a single typed-graphlet independently.
For instance, the higher-order typed-graphlet adjacency matrices can be combined in some fashion (\eg, summation) and may even be assigned weights based on the importance of the typed-graphlet.
Moreover, the typed-graphlet conductance can be adapted in a straightforward fashion to handle multiple typed-graphlets.

\subsubsection{Discussion}
Benson et al.~\cite{benson2016higher} refers to their higher-order balanced cut measure as \emph{motif conductance} and it differs from our proposed notion of typed-graphlet conductance. 
However, the definition used matches more with a generalization known as the \emph{edge expansion}. 
The edge expansion of a cut $(S,\bar{S})$ is defined as
\begin{align}
\psi(S,\bar{S})=\frac{\cut(S,\bar{S})}{\min(|S|,|\bar{S}|)}.
\end{align}
The balancing is in terms of the number of vertices in a cluster. Motif conductance was defined with a balancing in terms of the number of vertices in any graphlet instance. To be precise, for any set of vertices $S$, let the cluster size of $S$ in $G$ based on $H$ be
\begin{align}
|S|_G^H&=\sum_{F\in I_G(H)}\sum_{v\in V(F)}\mathbf{1}(v\in S)=\sum_{v\in S}\sum_{F\in I_G(H)}\mathbf{1}\left(v\in V(F)\right).
\end{align}
Note that this does not take into account the degree contributions of each graphlet, only its node count contributions to a cluster $S$. 
In terms of our notation, untyped ``motif conductance" of a cut $(S,\bar{S})$ is defined in that work as
\begin{align*}
\psi_G^H(S,\bar{S})=\frac{\cut_G^H(S,\bar{S})}{\min\left(|S|_G^H,|\bar{S}|_G^H\right)}.
\end{align*}
Since this does not take into account node degree information, this is more of a generalization of edge expansion \cite{hoory2006expander, alon1997edge}, ``graphlet expansion", if you will, rather than conductance.
The difference is worth noting because it has been shown that conductance minimization gives better partitions than expansion minimization \cite{Kannan2004}. By only counting nodes, we give equal importance to all the vertices in a graphlet. Arguably, it is more reasonable to give greater importance to the vertices that not only participate in many graphlets but also have many neighbors within a graphlet and give lesser importance to vertices that have more neighbors that do not participate in a graphlet or do not have many neighbors within a graphlet. Our definition of typed-graphlet volume captures this idea to give an appropriate general notion of conductance.

\subsection{Typed-Graphlet Node Embeddings} \label{sec:embeddings}
Algorithm~\ref{alg:higher-order-typed-graphlet-embeddings} summarizes the method for deriving higher-order typed motif-based node embeddings (as opposed to clusters/partitions of nodes, or an ordering for compression/analysis, see Section~\ref{sec:exp-graph-compression}).
In particular, given a typed-graphlet adjacency matrix, Algorithm~\ref{alg:higher-order-typed-graphlet-embeddings} outputs a $N \times D$ matrix $\mZ$ of node embeddings.
For graphs with many connected components, Algorithm~\ref{alg:higher-order-typed-graphlet-embeddings} is called for each connected component of $G^{H}$ and the resulting embeddings are stored in the appropriate locations in the overall embedding matrix $\mZ$.

Multiple typed-graphlets can also be used to derive node embeddings.
One approach that follows from~\cite{HONE} is to derive low-dimensional node embeddings for each typed-graphlet of interest using Algorithm~\ref{alg:higher-order-typed-graphlet-embeddings}. 
After obtaining all the node embeddings for each typed-graphlet, we can simply concatenate them all into one single matrix $\mY$.
Given $\mY$, we can simply compute another low-dimensional embedding to obtain the final node embeddings that capture the important latent features from the node embeddings from different typed-graphlets.

\begin{algorithm} 
\DontPrintSemicolon
\caption{Typed-Graphlet Spectral Embedding}
\label{alg:higher-order-typed-graphlet-embeddings}
\SetAlgoLined
\LinesNumbered
\KwIn{Heterogeneous graph $G$, typed graphlet $H$, embedding dimension $D$}
\KwOut{Higher-order embedding matrix $\mZ \in \RR^{N \times D}$ for $H$}
\medskip
$\!\big(\mW_{G^H}\big)_{ij} \!\leftarrow \# \text{ instances of H containing } i \text{ and } j, \; \forall (i,j) \in E$

$\mD_{G^H} \leftarrow \!\text{typed-graphlet degree matrix } \big(\mD_{G^H}\big)_{ii} \!= \!\sum_{j} \!\big(\mW_{G^H}\big)_{ij}$

$\vx_1, \vx_2, \ldots, \vx_D \leftarrow $ eigenvectors of $D$ smallest eigenvalues of $\quad\quad\quad \mL_{G^{H}} = \eye - \mD_{G^H}^{-1/2}\mW_{G^H}\mD_{G^H}^{-1/2}$

$Z_{ij} \leftarrow X_{ij} \Big/ \sqrt{\sum_{j=1}^{D} X_{ij}^2}$

\Return{$\mZ = \big[\, \vz_1\;\,\, \vz_2\;\, \cdots \;\, \vz_n \,\big]^T \!\in \RR^{N \times D}$}
\end{algorithm}

\section{Theoretical Analysis}\label{sec:theory}
In this section, we show the near-optimality of Algorithm~\ref{algo:spectral}. 
The idea is to translate what we know about ordinary conductance for weighted homogeneous graphs, for which there has been substantial theory developed \cite{chung1997spectralbook, chung2000weighted, chung1996laplacians}, to this new measure we introduce of typed-graphlet conductance by relating these two quantities. Through this association, we can derive Cheeger-like results for $\phi^H(G)$ and for the approximation given by the typed-graphlet spectral clustering algorithm (Algorithm~\ref{algo:spectral}).
As in the previous section, assume we have a heterogeneous graph $G$ and a typed graphlet $H$. Also, assume we have the weighted graph $G^H$ induced from the typed-graphlet adjacency matrix $\mW_G^H$. 

We prove two lemmas from which our main theorem will immediately hold. Lemma \ref{lem:1} shows that the typed-graphlet volume and ordinary volume measures match: total typed-graphlet degree contributions of typed-graphlet instances matches with total counts of typed-graphlet instances on edges for any given subset of nodes.
In contrast, Lemma~\ref{lem:2} shows that equality does not hold for the notions of cut size. The reason lies in the fact that for any typed-graphlet instance, typed-graphlet cut size on $G$ only counts the number of typed-graphlet instances cut whereas ordinary cut size on $G^H$ counts the number of times typed-graphlet instances are cut. Therefore, these two measure at least match and at most differ by a factor equal to the size of the $H$, which is a fixed value that is small for the typed graphlets we are interested in, of size 3 or 4. Thus, we are able to reasonably bound the discrepancy between the notions of cut sizes.

Using these two lemmas, we immediately get our main result in Theorem~\ref{thm:1} which shows the relationship between $\phi^H(G)$ and $\phi(G^H)$ in the form of tightly bound inequality that is dependent only on the number of edges in $H$.
From this theorem, we arrive at two important corollaries. In Corollary~\ref{cor:1}, we prove Cheeger-like bounds for typed-graphlet conductance. In Corollary \ref{cor:2}, we show that the output of Algorithm \ref{algo:spectral} gives a near-optimal solution up to a square root factor and it goes further to show bounds in terms of the optimal value $\phi^H(G)$ to show the constant of the approximation algorithm which depends on the second smallest eigenvalue of the typed-graphlet adjacency matrix and the number of edges in $H$. This last result does not give a purely constant-factor approximation to the graph conductance because of their dependence on $G$ and $H$, yet it still gives a very efficient, and non-trivial, approximation for fixed $G$ and $H$. 
Moreover, the second part of Corollary \ref{cor:2} provides intuition as to what makes a specific typed-graphlet a suitable choice for higher-order clustering. Typed-graphlets that have a good balance of small edge set size and strong connectivity in the heterogeneous graph---in the sense that the second eigenvalue of the normalized typed-graphlet Laplacian is large---will have a tighter upper bound to their approximation for minimum typed-graphlet conductance. Therefore, this last result in Corollary \ref{cor:2} provides a way to quickly and quantitatively measure how good a typed graphlet is for determining higher-order organization before even executing the clustering algorithm.

\begin{note}
In the case of the simple 1st-order untyped graphlet, \ie, a node pair with an interconnecting edge, we recover the results for traditional spectral clustering since $|E(H)|=1$ in this case. 
Furthermore, if $G$ is a homogeneous graph, \ie, $|\mathcal{T}_V|=|\mathcal{T}_E|=1$, we get the special case of untyped graphlet-based spectral clustering.  
Therefore, our framework generalizes the methods of traditional spectral clustering and untyped-graphlet spectral clustering for homogeneous graphs.
\end{note}

In the following analysis, we let $\mathbf{1}(\cdot)$ represent the Boolean predicate function and let $(\mW_{G^H})_e$ be the edge weight of edge $e$ in $G^H$.

\begin{lemma}\label{lem:1}
Let $S$ be a subset of nodes in $V(G)$. Then,
$$\vol_{G^H}(S)=\vol_G^H(S).$$
\end{lemma}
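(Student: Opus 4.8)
The plan is to reduce the statement to a pointwise identity at the level of node degrees: I would show that $\deg_{G^H}(v)=\deg_G^H(v)$ for every $v\in V(G)$, where the left side is the ordinary (weighted) degree of $v$ in the induced graph $G^H$ and the right side is the typed-graphlet degree. Since $\vol_{G^H}(S)$ is by definition the ordinary volume $\sum_{v\in S}\deg_{G^H}(v)$ and $\vol_G^H(S)=\sum_{v\in S}\deg_G^H(v)$ by Definition~\ref{def:typed-graphlet-volume}, summing this pointwise identity over $v\in S$ yields the lemma immediately.

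To establish the pointwise identity, I would first expand the weighted degree of $v_i$ in $G^H$ using the definition of the typed-graphlet adjacency matrix:
$$\deg_{G^H}(v_i)=\sum_j (\mW_{G^H})_{ij}=\sum_j\sum_{F\in I_G(H)}\mathbf{1}\big(\{v_i,v_j\}\in E(F)\big).$$
Because these are finite sums, I would interchange the order of summation to bring the sum over instances $F\in I_G(H)$ outside. The key observation is that, for a fixed instance $F$, the inner sum $\sum_j \mathbf{1}(\{v_i,v_j\}\in E(F))$ counts exactly the number of edges of $F$ incident to $v_i$: each such edge is determined uniquely by its other endpoint $v_j$, so the indicator fires precisely once per incident edge. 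Hence $\sum_j \mathbf{1}(\{v_i,v_j\}\in E(F))=|\{e\in E(F): v_i\in e\}|$, and substituting this back gives $\deg_{G^H}(v_i)=\sum_{F\in I_G(H)}|\{e\in E(F): v_i\in e\}|=\deg_G^H(v_i)$, which is exactly the definition of typed-graphlet degree.

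This lemma is essentially a bookkeeping identity, so I do not expect a genuine obstacle; the only point requiring care is the summation interchange together with verifying the correspondence between neighbors $v_j$ of $v_i$ within $F$ and edges of $F$ incident to $v_i$. This correspondence is clean because each edge of the simple subgraph $F$ incident to $v_i$ has a unique second endpoint, so there is no double counting or omission. The conceptual reason the two quantities coincide is that volume, unlike cut size, aggregates degree contributions edge-by-edge, and the construction of $\mW_{G^H}$ deposits exactly one unit of edge weight per (instance, edge) pair; summing these weights therefore recovers precisely the per-instance edge-incidence counts that define $\vol_G^H$. This is exactly the contrast that makes Lemma~\ref{lem:2}, on cut sizes, genuinely nontrivial by comparison.
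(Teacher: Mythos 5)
Your proof is correct and follows essentially the same route as the paper's: both expand the weighted degree in $G^H$ via the definition of $\mW_{G^H}$, swap the sum over instances $F\in I_G(H)$ outward, and identify the inner count as the number of edges of $F$ incident to the node, yielding the typed-graphlet degree. The only difference is notational---you index the degree sum by the other endpoint $j$ of each edge (matrix columns), while the paper indexes by edges $e\in E(G^H)$ with weights $(\mW_{G^H})_e$---which is immaterial.
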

\begin{proof}
\begin{align}
\vol_{G^H}(S)&=\sum_{v\in S}\deg_{G^H}(v) \\
&=\sum_{v\in S}\sum_{e\in E(G^H)}\mathbf{1}(v\in e)\cdot(\mW_{G^H})_e \\
&=\sum_{v\in S}\sum_{e\in E(G^H)}\mathbf{1}(v\in e)\cdot\sum_{F\in I_G(H)}\mathbf{1}(e\in E(F)) \\
&=\sum_{v\in S}\sum_{F\in I_G(H)}\sum_{e\in E(G^H)}\mathbf{1}(v\in e)\cdot\mathbf{1}(e\in E(F)) \\
&=\sum_{v\in S}\sum_{F\in I_G(H)}\sum_{e\in E(F)}\mathbf{1}(v\in e) \\
&=\sum_{v\in S}\sum_{F\in I_G(H)}|\{e\in E(F)\ |\ v\in e\}| \\
&=\sum_{v\in S}\deg_G^H(v) \\
&=\vol_G^H(S).
\end{align}
\end{proof}

\renewcommand{\S}{\bar{S}}
\begin{lemma}\label{lem:2}
Let $(S,\bar{S})$ be a cut in $G$. Then,
$$\frac{1}{|E(H)|}\cut_{G^H}(S,\bar{S})\le\cut_G^H(S,\bar{S})\le\cut_{G^H}(S,\bar{S}).$$
\end{lemma}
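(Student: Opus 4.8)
The plan is to rewrite the weighted cut $\cut_{G^H}(S,\bar S)$ as a double sum over typed-graphlet instances and their crossing edges, then compare this edge-level count against the instance-level count $\cut_G^H(S,\bar S)$ by bounding, for each fixed instance, how many of its edges can cross the cut.

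First I would unfold the weighted cut. Treating $G^H$ as weighted with edge weights $(\mW_{G^H})_e=\sum_{F\in I_G(H)}\mathbf{1}(e\in E(F))$, the ordinary cut size is the total weight crossing the cut, so by substituting the definition of the weights and swapping the order of summation (exactly as in the chain of equalities in Lemma~\ref{lem:1}, and using $E(F)\subset E(G^H)$) we obtain
\[
\cut_{G^H}(S,\bar S)=\sum_{\substack{e\in E(G^H)\\ e\text{ crosses }(S,\bar S)}}(\mW_{G^H})_e=\sum_{F\in I_G(H)}\big|\{e\in E(F)\ |\ e\text{ crosses }(S,\bar S)\}\big|.
\]
Writing $c_F$ for the number of edges of $F$ that cross the cut, this says $\cut_{G^H}(S,\bar S)=\sum_{F\in I_G(H)}c_F$, while by definition $\cut_G^H(S,\bar S)=\big|\{F\in I_G(H):c_F\ge 1\}\big|$.

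The core of the argument is then to establish, for every instance $F$, the pointwise bounds governing $c_F$. If $F$ does not cross the cut then $c_F=0$ and $F$ contributes to neither quantity. If $F$ crosses the cut --- that is, $V(F)\cap S\neq\varnothing$ and $V(F)\cap\bar S\neq\varnothing$ --- then $1\le c_F\le|E(H)|$. The upper bound is immediate since $F$ is isomorphic to $H$ and hence has exactly $|E(H)|$ edges in total. The lower bound $c_F\ge 1$ is the one delicate point: it uses that $F$, being a typed graphlet, is \emph{connected}, so any vertex of $F$ in $S$ is joined to any vertex of $F$ in $\bar S$ by a path lying in $F$, and that path must traverse at least one crossing edge.

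Finally, summing the pointwise bounds over $F\in I_G(H)$ delivers both inequalities at once: since the crossing instances are exactly those with $c_F\ge 1$,
\[
\cut_G^H(S,\bar S)=\sum_{F:c_F\ge 1}1\le\sum_{F:c_F\ge 1}c_F=\cut_{G^H}(S,\bar S)\le\sum_{F:c_F\ge 1}|E(H)|=|E(H)|\,\cut_G^H(S,\bar S),
\]
and dividing the right-hand inequality by $|E(H)|$ yields the claim. The main obstacle is conceptual rather than computational: recognizing that the weighted cut over-counts each crossed instance by precisely its number of crossing edges, so that the connectivity of $H$ is exactly what pins the discrepancy between the factors $1$ and $|E(H)|$.
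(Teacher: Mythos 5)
Your proof is correct and follows essentially the same route as the paper's: both rewrite $\cut_{G^H}(S,\bar{S})$ as $\sum_{F\in I_G(H)} c_F$ by expanding the edge weights and swapping sums, then bound each crossing instance's contribution by $1\le c_F\le|E(H)|$, using the connectedness of $H$ to show that vertex-crossing implies at least one edge crosses. The only cosmetic difference is that you state $\cut_G^H(S,\bar{S})=|\{F: c_F\ge 1\}|$ up front as a definition and justify the equivalence afterward, whereas the paper records it as a separate identity of indicator functions; the substance is identical.
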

\begin{proof}
For subsequent simplification, we define $[S,\S]$ to be the set of edges in $E(G^H)$ that cross cut $(S,\S)$:
\begin{align}
[S,\S]:=\{e\in E(G^H)\ |\ e\cap S\neq\varnothing, e\cap\S\neq\varnothing\}.
\end{align}
Then,
\begin{align}
\cut_{G^H}(S,\S)&=\sum_{e\in E(G^H)}\mathbf{1}\left(e\in[S,\S]\right)\cdot(\mW_{G^H})_e \\
&=\sum_{e\in E(G^H)}\mathbf{1}\left(e\in[S,\S]\right)\cdot\sum_{F\in I_G(H)}\mathbf{1}(e\in E(F)) \\
&=\sum_{F\in I_G(H)}\sum_{e\in E(G^H)}\mathbf{1}\left(e\in[S,\S]\right)\cdot\mathbf{1}(e\in E(F)) \\
&=\sum_{F\in I_G(H)}\sum_{e\in E(F)}\mathbf{1}\left(e\in[S,\S]\right) \\
&=\sum_{F\in I_G(H)}|E(F)\cap[S,\S]| \label{eq:1}
\end{align}

\noindent Note that for an instance $F\in I_G(H)$ such that $E(F)\cap[S,\S]\neq\varnothing$, there exists at least one edge in $E(F)$ cut by $(S,\S)$ and at most all edges in $E(F)$ are cut by $(S,\S)$. Clearly, if $E(F)\cap[S,\S]=\varnothing$, then no edge is cut by $(S,\S)$. This shows that for such an instance we have
\begin{align}
1\le|E(F)\cap[S,\S]|\le|E(F)|=|E(H)|.
\end{align}
Therefore, Equation \ref{eq:1} satisfies the following inequalities:
\begin{align}
\sum_{F\in I_G(H)}\mathbf{1}\left(E(F)\cap[S,\S]\neq\varnothing\right)&\le\cut_{G^H}(S,\S) \label{eq:2} \\
|E(H)|\cdot\sum_{F\in I_G(H)}\mathbf{1}\left(E(F)\cap[S,\S]\neq\varnothing\right)&\ge \cut_{G^H}(S,\S). \label{eq:3}
\end{align}
Referring to Definition \ref{def:typed-graphlet-cut-size} for typed-graphlet cut size and noting that since $H$ is a connected graph,
\begin{align}
\mathbf{1}(E(F)\cap[S,\S]\neq\varnothing)=\mathbf{1}(V(F)\cap S\neq\varnothing, V(F)\cap \S\neq\varnothing),
\end{align}
we find that
\begin{align}
\sum_{F\in I_G(H)}\mathbf{1}(E(F)\cap[S,\S])=\cut_G^H(S,\S).
\end{align}
Plugging this into Inequalities \ref{eq:2}-\ref{eq:3}, we get
\begin{align}
\cut_G^H(S,\S)\le\cut_{G^H}(S,\S)\le|E(H)|\cut_G^H(S,\S)
\end{align}
or, equivalently,
\begin{align}
\frac{1}{|E(H)|}\cut_{G^H}(S,\S)\le\cut_G^H(S,\S)\le\cut_{G^H}(S,\S)
\end{align}
\end{proof}

\begin{theorem}\label{thm:1}
\begin{align*}
\frac{1}{|E(H)|}\cdot\phi(G^H)\le\phi^H(G)\le\phi(G^H)
\end{align*}
\end{theorem}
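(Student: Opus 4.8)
The plan is to combine the two preceding lemmas cut-by-cut and then pass to the minimum over all cuts. First I would fix an arbitrary proper, nonempty cut $(S,\bar{S})$ of $G$. By Lemma~\ref{lem:1} the two volume measures agree on each side, so the balancing denominators coincide, namely $\min(\vol_G^H(S),\vol_G^H(\bar{S}))=\min(\vol_{G^H}(S),\vol_{G^H}(\bar{S}))$. Dividing the chain of inequalities in Lemma~\ref{lem:2} by this common positive quantity then yields a pointwise (per-cut) sandwich
\[
\frac{1}{|E(H)|}\,\phi_{G^H}(S,\bar{S})\le\phi_G^H(S,\bar{S})\le\phi_{G^H}(S,\bar{S}),
\]
valid for every cut $(S,\bar{S})$.

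The remaining work is to take the minimum over cuts on each side, which must be handled separately because the cut minimizing $\phi^H(G)$ and the cut minimizing $\phi(G^H)$ need not coincide. For the upper bound I would let $S^{\star}$ denote a cut attaining $\phi(G^H)$; then $\phi^H(G)\le\phi_G^H(S^{\star},\bar{S}^{\star})\le\phi_{G^H}(S^{\star},\bar{S}^{\star})=\phi(G^H)$, where the first inequality is the definition of $\phi^H(G)$ as a minimum over cuts and the second is the right half of the pointwise sandwich. For the lower bound I would instead let $S^{\star\star}$ attain $\phi^H(G)$; then $\phi^H(G)=\phi_G^H(S^{\star\star},\bar{S}^{\star\star})\ge\frac{1}{|E(H)|}\phi_{G^H}(S^{\star\star},\bar{S}^{\star\star})\ge\frac{1}{|E(H)|}\phi(G^H)$, using the left half of the sandwich followed by the minimality of $\phi(G^H)$.

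Essentially all the substance is carried by Lemmas~\ref{lem:1} and~\ref{lem:2}; the only point demanding care is this final assembly, where one must evaluate the two halves of the per-cut inequality at the \emph{correct} minimizers and in the correct order, so that the definitions of the two conductances as minima push each bound in the intended direction. I do not anticipate a genuine obstacle here: once the lemmas are in hand, the argument is a short minimax-style bookkeeping, and the factor $|E(H)|$ in the lower bound is inherited directly from the cut-size discrepancy quantified in Lemma~\ref{lem:2}.
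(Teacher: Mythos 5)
Your proposal is correct and follows essentially the same route as the paper's proof: divide the cut-size sandwich of Lemma~\ref{lem:2} by the common volume from Lemma~\ref{lem:1} to obtain the per-cut inequality, then pass to minima over cuts. The only difference is that you spell out the final minimization step (evaluating each half at the appropriate minimizer), which the paper compresses into the single remark that the per-cut inequality holds for all $S\subset V(G)$; this is a harmless and correct elaboration of the same argument.
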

\begin{proof}
Let $(S,\S)$ be any cut in $G$. From Lemma 2, we have that
\begin{align}\label{eq:thm1-1}
\frac{1}{|E(H)|}\cut_{G^H}(S,\S)\le\cut_G^H(S,\S)\le\cut_{G^H}(S,\S).
\end{align}
Lemma \ref{lem:1} shows that $\vol_{G^H}(S)=\vol_G^H(S)$. Therefore, if we divide these inequalities above by $\vol_{G^H}(S)=\vol_G^H(S)$, we get that
\begin{align}\label{eq:thm1-2}
\frac{1}{|E(H)|}\phi_{G^H}(S,\S)\le\phi_G^H(S,\S)\le\phi_{G^H}(S,\S)
\end{align}
by the definitions of conductance and typed-graphlet conductance. Since this result holds for any subset $S\subset V(G)$, it implies that
\begin{align}
\frac{1}{|E(H)|}\cdot\phi(G^H)\le\phi^H(G)\le\phi(G^H). \label{eq:4}
\end{align}
\end{proof}

\begin{corollary}\label{cor:1}
Let $\lambda_2$ be the second smallest eigenvalue of $\mL_{G^H}$. Then,
$$\frac{\lambda_2}{2|E(H)|}\le\phi^H(G)\le\sqrt{2\lambda_2}.$$
\end{corollary}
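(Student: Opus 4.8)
The plan is to reduce the statement directly to Theorem~\ref{thm:1} combined with the classical Cheeger inequality for the weighted homogeneous graph $G^H$. The key observation is that $\mL_{G^H}$ is precisely the normalized Laplacian of the ordinary (weighted, undirected) graph $G^H$ induced by $\mW_{G^H}$, so no new spectral theory is required: the quantity $\lambda_2$ is exactly the spectral-gap eigenvalue that appears in the standard Cheeger bound for such graphs. This is the whole point of having first established Theorem~\ref{thm:1}, which lets us pass between the higher-order measure $\phi^H(G)$ and the ordinary conductance $\phi(G^H)$.

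First I would invoke the classical weighted Cheeger inequality (see, e.g., \cite{chung1997spectralbook, chung1996laplacians}), which states that
$$\frac{\lambda_2}{2}\le\phi(G^H)\le\sqrt{2\lambda_2}.$$
This sandwiches the ordinary conductance of $G^H$ between a linear and a square-root function of $\lambda_2$. Because $G^H$ is genuinely a homogeneous weighted graph, this result applies verbatim and carries out all of the analytic work; the corollary is then purely a matter of transferring these bounds to $\phi^H(G)$.

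Next I would chain the inequalities through Theorem~\ref{thm:1}. For the lower bound, Theorem~\ref{thm:1} gives $\phi^H(G)\ge\frac{1}{|E(H)|}\phi(G^H)$, and substituting $\phi(G^H)\ge\lambda_2/2$ yields $\phi^H(G)\ge\frac{\lambda_2}{2|E(H)|}$. For the upper bound, Theorem~\ref{thm:1} gives $\phi^H(G)\le\phi(G^H)$, and substituting $\phi(G^H)\le\sqrt{2\lambda_2}$ yields $\phi^H(G)\le\sqrt{2\lambda_2}$. Putting the two together gives exactly the claimed inequality.

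The hard part is not any computation but rather ensuring the hypotheses of the classical Cheeger inequality are genuinely met: one needs $G^H$ to be connected so that $\lambda_2$ is the true spectral gap, since otherwise $\lambda_2=0$ and the lower bound degenerates. This is precisely why Algorithm~\ref{algo:spectral} operates on each connected component of $G^H$ separately, and the corollary should be read as applying within a connected component. With that caveat in place, the remainder of the argument is a clean two-line substitution into the bounds of Theorem~\ref{thm:1}.
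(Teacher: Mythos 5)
Your proof is correct and follows exactly the same route as the paper's own proof: apply the classical weighted Cheeger inequality $\frac{\lambda_2}{2}\le\phi(G^H)\le\sqrt{2\lambda_2}$ to the induced graph $G^H$, then chain both bounds through Theorem~\ref{thm:1}. Your added remark about connectedness of $G^H$ (so that $\lambda_2$ is a genuine spectral gap, consistent with Algorithm~\ref{algo:spectral} operating per component) is a sensible clarification the paper leaves implicit, but it does not change the argument.
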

\begin{proof}
Cheeger's inequality for weighted undirected graphs (see proof in ~\cite{chung1997spectralbook}) gives
\begin{align}
\frac{\lambda_2}{2}\le\phi(G^H)\le\sqrt{2\lambda_2}.
\end{align}
Using these bounds for $\phi(G^H)$ and applying them to Theorem \ref{thm:1}, we find that
\begin{align}
\frac{\lambda_2}{2|E(H)|}\le\phi^H(G)\le\sqrt{2\lambda_2}.
\end{align}
\end{proof}

\begin{corollary}\label{cor:2}
Let $S$ be the cluster output of Algorithm \ref{algo:spectral} and let $\alpha=\phi_G^H(S,\bar{S})$ be its corresponding typed-graphlet conductance on $G$ based on $H$. Then,
\begin{align*}
\phi^H(G)\le\alpha\le\sqrt{4|E(H)|\phi^H(G)}.
\end{align*}
Moreover, if we let $\lambda_2$ be the second smallest eigenvalue of $\mL_{G^H}$, then
\begin{align*}
\phi^H(G)\le\alpha\le\beta\cdot\phi^H(G),
\end{align*}
where $$\beta=\sqrt{\frac{8}{\lambda_2}}\cdot|E(H)|,$$ showing that, for a fixed $G$ and $H$, Algorithm \ref{algo:spectral} is a $\beta$-approximation algorithm to the typed-graphlet conductance minimization problem.
\end{corollary}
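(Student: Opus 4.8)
The plan is to reduce everything to the classical (weighted) spectral clustering guarantee on the induced graph $G^H$, and then translate back to typed-graphlet quantities using Theorem~\ref{thm:1} and Corollary~\ref{cor:1}. First I would dispose of the two lower bounds $\phi^H(G)\le\alpha$ appearing in both displays: since $\alpha=\phi_G^H(S,\bar S)$ is the typed-graphlet conductance of the one particular cut returned by the algorithm, while $\phi^H(G)=\min_{S}\phi_G^H(S,\bar S)$ is the minimum over all cuts (Definition~\ref{def:typed-graphlet-conductance}), the inequality is immediate.

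For the upper bound in the first display, the crucial input is the constructive (``hard'') direction of Cheeger's inequality on the weighted graph $G^H$: the sweep over the Fiedler vector $\vv_2$ of $\mL_{G^H}$ produces a cut whose \emph{ordinary} weighted conductance is at most $\sqrt{2\lambda_2}$. Because Algorithm~\ref{algo:spectral} returns the sweep cut of minimum conductance, its output $S$ satisfies $\phi_{G^H}(S,\bar S)\le\sqrt{2\lambda_2}$. I would then invoke the pointwise inequality established in the proof of Theorem~\ref{thm:1}, namely $\phi_G^H(S,\bar S)\le\phi_{G^H}(S,\bar S)$, to conclude $\alpha\le\sqrt{2\lambda_2}$. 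Finally, rearranging the lower Cheeger-type bound of Corollary~\ref{cor:1} into $\lambda_2\le 2|E(H)|\,\phi^H(G)$ gives $\sqrt{2\lambda_2}\le\sqrt{4|E(H)|\,\phi^H(G)}$, which yields the first claim.

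The second display is then pure algebra on the bound just obtained. Writing $\alpha\le\sqrt{4|E(H)|\,\phi^H(G)}=\sqrt{4|E(H)|}\cdot\phi^H(G)\big/\sqrt{\phi^H(G)}$ and substituting $1\big/\sqrt{\phi^H(G)}\le\sqrt{2|E(H)|/\lambda_2}$ (again from the lower bound of Corollary~\ref{cor:1}) collapses the right-hand side, via $\sqrt{4|E(H)|}\cdot\sqrt{2|E(H)|/\lambda_2}=|E(H)|\sqrt{8/\lambda_2}$, to $\beta\,\phi^H(G)$. This establishes that Algorithm~\ref{algo:spectral} is a $\beta$-approximation for fixed $G$ and $H$.

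The main obstacle is the care needed to distinguish the two conductances in play: the ordinary weighted conductance $\phi_{G^H}$ that the spectral sweep and Cheeger's inequality actually control, versus the typed-graphlet conductance $\phi_G^H=\alpha$ that the corollary bounds. The bridge between them is the pointwise inequality from Theorem~\ref{thm:1}. One must also be explicit that the Cheeger guarantee used here is the \emph{constructive} sweep version (which the proof cited in~\cite{chung1997spectralbook} supplies), not merely the existence bound $\phi(G^H)\le\sqrt{2\lambda_2}$ quoted in Corollary~\ref{cor:1}; everything else is a direct substitution of the bounds from Theorem~\ref{thm:1} and Corollary~\ref{cor:1}.
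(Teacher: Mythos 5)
Your proof is correct, and its skeleton---Cheeger's inequality on the weighted graph $G^H$, the pointwise inequality $\phi_G^H(\cdot)\le\phi_{G^H}(\cdot)$ from Theorem~\ref{thm:1}, and the lower bound of Corollary~\ref{cor:1}---is the same as the paper's; your treatment of the two lower bounds and the final algebra producing $\beta$ coincide with the paper's essentially verbatim. The genuine difference is where Cheeger's inequality is applied. The paper introduces the cut $(T,\bar{T})$ attaining $\phi(G^H)$ and chains $\alpha\le\phi_G^H(T,\bar{T})\le\phi_{G^H}(T,\bar{T})\le\sqrt{2\lambda_2}$, justifying the first link by asserting that ``$\alpha$ achieves the minimal typed-graphlet conductance.'' Taken literally, that would mean $\alpha=\phi^H(G)$, which a sweep over one eigenvector cannot guarantee for an NP-hard objective; and since $(T,\bar{T})$ need not be a sweep cut, the first link does not follow from what the algorithm actually computes. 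Your route repairs exactly this point: you apply the \emph{constructive} (sweep) form of Cheeger's inequality to the cut the algorithm returns---which is legitimate, since Algorithm~\ref{algo:spectral} selects the sweep cut minimizing $\phi_{G^H}$---obtaining $\phi_{G^H}(S,\bar{S})\le\sqrt{2\lambda_2}$ directly, and then you need Theorem~\ref{thm:1}'s pointwise inequality only on that same cut. The cost is that you must invoke the sweep version of Cheeger's inequality rather than the bare existence bound quoted in Corollary~\ref{cor:1} (the reference \cite{chung1997spectralbook} does supply it); the benefit is that every inequality in your chain refers to the algorithm's own output, so the $\beta$-approximation guarantee genuinely attaches to Algorithm~\ref{algo:spectral}. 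Your closing remark distinguishing the constructive from the existential Cheeger bound identifies precisely the point on which the paper's own write-up is loose.
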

\begin{proof}
Clearly $\phi^H(G)\le\alpha$ since $\phi^H(G)$ is the minimal typed-graphlet conductance. To prove the upper bound, let $(T,\bar{T})$ be the cut that achieves the minimal conductance on $G^H$, that is, $\phi_{G^H}(T,\bar{T})=\phi(G^H)$. Then,
\begin{align}
\alpha&\le\phi_G^H(T,\bar{T}) \label{eq:cor2-1}\\
&\le\phi_{G^H}(T,\bar{T}) \label{eq:cor2-2}\\
&\le\sqrt{2\lambda_2} \label{eq:cor2-3}\\
&\le\sqrt{4|E(H)|\phi^H(G)}.  \label{eq:cor2-4}
\end{align}
Inequality \ref{eq:cor2-1} follows from the fact that $\alpha$ achieves the minimal typed-graphlet conductance. Inequality \ref{eq:cor2-2} follows from Inequality \ref{eq:thm1-2} in Theorem \ref{thm:1}. 
Inequality \ref{eq:cor2-3} follows from Cheeger's inequality for weighted graphs (see \cite{chung1996laplacians} for a proof). 
Inequality \ref{eq:cor2-4} follows from the lower bound in Corollary \ref{cor:1}. \\

\noindent We can go a bit further to express the bounds entirely in terms of $\phi^H(G)$ by noting that
\begin{align}
\alpha&\le\sqrt{4|E(H)|\phi^H(G)} \\
&=\sqrt{\frac{4|E(H)|}{\phi^H(G)}}\cdot\phi^H(G) \\
&\le\sqrt{\frac{8}{\lambda_2}}\cdot|E(H)|\cdot\phi^H(G) \label{eq:cor2-5}
\end{align}
where Inequality \ref{eq:cor2-5} follows from the fact that $\sqrt{\phi^H(G)}\ge\frac{\lambda_2}{2|E(H)|}$ by the lower bound of Corollary \ref{cor:1}.
\newline
\end{proof}
\smallskip

\section{Experiments}\label{sec:exp}
This section empirically investigates the effectiveness of the proposed approach quantitatively for typed-graphlet spectral clustering (Section~\ref{sec:exp-quant-clusters}), link prediction using the higher-order node embeddings from our approach (Section~\ref{sec:exp-link-pred}) and the typed-graphlet spectral ordering for graph compression (Section~\ref{sec:exp-graph-compression}).
Unless otherwise mentioned, we use all 3 and 4-node graphlets.

\newcommand{\GraphletFigScale}{0.05}
\begin{table}[h!]
\centering
\renewcommand{\arraystretch}{1.2} 
\setlength{\tabcolsep}{3pt}
\caption{Network properties and statistics. 
Note $|\mathcal{T}_V|$ = \# of node types. 
Comparing the number of unique typed motifs that occur for each induced subgraph (\eg, there are 3 different typed 3-path graphlets that appear in yahoo).
}
\label{table:network-stats}
\vspace{-3mm}
\fontsize{7.5}{8.5}\selectfont
\begin{tabularx}{1.0\linewidth}{@{}l XXHH cH cc cccccc @{}HHHH H HHHHH
}
\toprule
\textbf{Graph}  &   
$|V|$  &  $|E|$  & & & $|\mathcal{T}_V|$ & &
\includegraphics[scale=0.6]{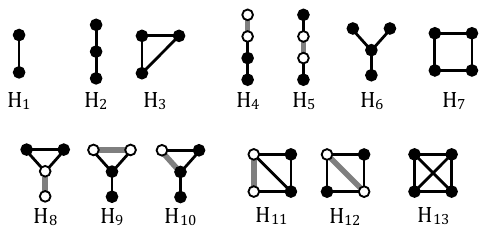} &
\includegraphics[scale=0.6]{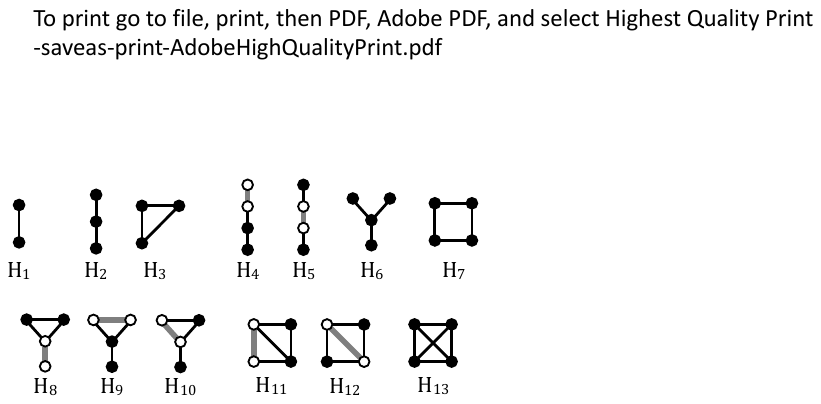} &
\includegraphics[scale=0.11]{fig4.pdf} &
\includegraphics[scale=0.6]{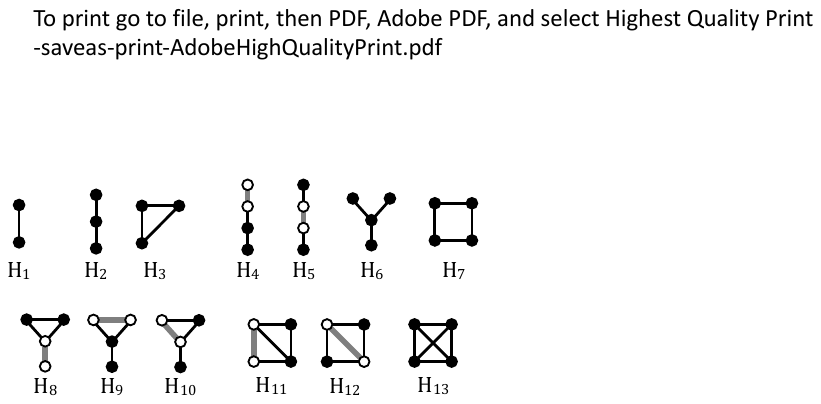} &
\includegraphics[scale=0.6]{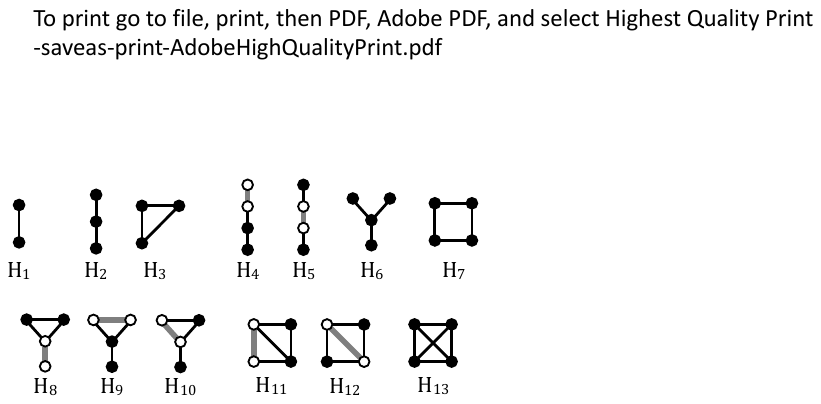} &
\includegraphics[scale=0.11]{fig7.pdf} &
\includegraphics[scale=0.10]{fig8.pdf} &
\includegraphics[scale=0.6]{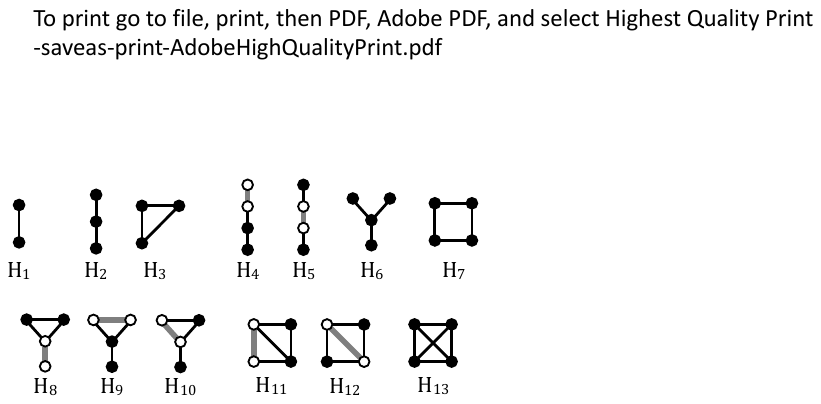} &&
\\
\midrule

\textsf{yahoo-msg} & 100.1k & 739.8k & && 2 & 1 & 
3 & 2 & 3 & 4 & 3 & 3 & 3 & 2 & 
& & & & \\

\textsf{dbpedia} & 495.9k & 921.7k & && 4 & 1 &
8 & 0 & 6 & 10 & 5 & 0 & 0 & 0 & 
& & & & \\

\textsf{digg} & 283.2k & 4.6M & && 2 & 1 & 
4 & 3 & 4 & 5 & 4 & 4 & 4 & 2 & 
& & & & \\

\textsf{movielens} & 28.1k & 170.4k & && 3 & 1 & 
7 & 1 & 6 & 9 & 6 & 3 & 3 & 0 & 
& & & & \\

\textsf{citeulike} & 907.8k & 1.4M & && 3 & 1 &
5 & 0 & 3 & 6 & 3 & 0 & 0 & 0 & \\

\textsf{fb-CMU} & 6.6k & 250k & && 3 & 1 & 
10 & 10 & 15 & 15 & 15 & 15 & 15 & 15 & 
& & & & \\

\textsf{reality} & 6.8k & 7.7k & && 2 & 1 & 
4 & 3 & 4 & 5 & 4 & 4 & 4 & 2 & 
& & & & \\

\textsf{gene} & 1.1k & 1.7k & && 2 & 1 & 
4 & 4 & 5 & 5 & 5 & 5 & 5 & 5 & \\

\textsf{citeseer} & 3.3k & 4.5k & && 6 & 1 & 
56 & 40 & 124 & 119 & 66 & 98 & 56 & 19 & \\

\textsf{cora} & 2.7k & 5.3k & && 7 & 1 & 
82 & 49 & 202 & 190 & 76 & 157 & 73 & 19 & \\

\textsf{webkb} & 262 & 459 & && 5 & 1 & 
31 & 21 & 59 & 59 & 23 & 51 & 32 & 8 & \\

\textsf{pol-retweet} & 18.5k & 48.1k & && 2 & 1 & 
4 & 4 & 5 & 5 & 5 & 5 & 5 & 4 & \\

\textsf{web-spam} & 9.1k & 465k & && 3 & 1 & 
10 & 10 & 15 & 15 & 15 & 15 & 15 & 15 & \\

\textsf{fb-relationship} & 7.3k & 44.9k & && 6 & 1 & 
50 & 47 & 112 & 109 & 85 & 106 & 89 & 77 & \\

\textsf{Enzymes-g123} & 90 & 127 & && 2 & 1 & 
4 & 3 & 5 & 5 & 5 & 4 & 3 & 0 & 
& & & & \\

\textsf{Enzymes-g279} & 60 & 107 & && 2 & 1 & 
4 & 4 & 5 & 5 & 5 & 5 & 5 & 0 & 
& & & & \\

\textsf{Enzymes-g293} & 96 & 109 & && 2 & 1 & 
4 & 1 & 5 & 5 & 1 & 2 & 1 & 0 & 
& & & & \\

\textsf{Enzymes-g296} & 125 & 141 & && 2 & 1 & 
4 & 1 & 4 & 5 & 2 & 1 & 1 & 0 & 
& & & & \\

\textsf{NCI109-g4008} & 90 & 105 & && 2 & 1 & 
3 & 0 & 3 & 3 & 0 & 0 & 0 & 0 & 
& & & & \\

\textsf{NCI109-g1709} & 102 & 106 & && 3 & 1 & 
5 & 0 & 5 & 5 & 1 & 0 & 0 & 0 & 
& & & & \\

\textsf{NCI109-g3713} & 111 & 119 & && 3 & 1 & 
4 & 0 & 6 & 4 & 0 & 0 & 0 & 0 & 
& & & & \\

\textsf{NCI1-g3700} & 111 & 119 & && 3 & 1 & 
4 & 0 & 6 & 4 & 0 & 0 & 0 & 0 & 
& & & & \\

\bottomrule
\end{tabularx}
\end{table}

\begin{table}[t!]
\centering
\setlength{\tabcolsep}{3.5pt}
\renewcommand{\arraystretch}{1.2} 
\scriptsize
\caption{Quantitative evaluation of the methods (external conductance~\protect\cite{almeida2011there}).
Note TGS is the approach proposed in this work.
The best result for each graph is bold.}
\vspace{-3mm}
\label{table:quant-conductance}
\small
\fontsize{8.0}{9.0}\selectfont
\begin{tabularx}{1.0\linewidth}{@{} lH XX XXXXX HHHH}

\toprule
&& 
\multicolumn{1}{@{}l}{\rotatebox{65}{\textbf{DS-H}}} &
\multicolumn{1}{l}{\rotatebox{65}{\textbf{KCore-H}}} &
\multicolumn{1}{l}{\rotatebox{65}{\textbf{LP-H}}} &
\multicolumn{1}{l}{\rotatebox{65}{\textbf{Louv-H}}}&
\multicolumn{1}{l}{\rotatebox{65}{\textbf{Spec-H}}}    &
\multicolumn{1}{l}{\rotatebox{65}{\textbf{GSpec-H}}} &
\multicolumn{1}{l}{\rotatebox{65}{\textbf{TGS}}} &
\\

\midrule
\textsf{yahoo-msg} && 0.5697 & 0.6624 & 0.2339 & 0.3288 & 0.0716 & 0.2000 & \textbf{0.0588} & \\
\textsf{dbpedia} && 0.7414 & 0.5586 & 0.4502 & 0.8252 & 0.9714 & 0.9404 & \textbf{0.0249} & \\ 
\textsf{digg} &&  0.4122 & 0.4443 & 0.7555 & 0.3232 & 0.0006 & \textbf{0.0004} & \textbf{0.0004} & \\ 
\textsf{movielens} && 0.9048 & 0.9659 & 0.7681 & 0.8620 & 0.9999 & 0.6009 & \textbf{0.5000} & \\ 
\textsf{citeulike} &&  0.9898 & 0.9963 & 0.9620 & 0.8634 & 0.9982 & 0.9969 & \textbf{0.7159} & \\ 
\textsf{fb-CMU} && 0.6738 & 0.9546 & 0.9905 & 0.8761 & 0.5724 & 0.8571 & \textbf{0.5000} & \\
\textsf{reality} &&  0.7619 & 0.3135 & 0.2322 & 0.1594 & 0.6027 & 0.0164 & \textbf{0.0080} & \\
\textsf{gene} && 0.8108 & 0.9298 & 0.9151 & 0.8342 & 0.4201 & 0.1667 & \textbf{0.1429} & \\ 
\textsf{citeseer} && 0.5000 & 0.6667 & 0.6800 & 0.6220 & 0.0526 & 0.0526 & \textbf{0.0333} & \\ 
\textsf{cora} && 0.0800 & 0.9057 & 0.8611 & 0.8178 & 0.0870 & 0.0870 & \textbf{0.0500} & \\
\textsf{webkb} && 0.2222 & 0.9286 &  0.6154 & 0.8646 & 0.6667 & 0.3333 & \textbf{0.2222} & \\ 
\textsf{pol-retweet} && 0.5686 & 0.6492 & 0.0291 & 0.0918 & 0.6676 & 0.0421 & \textbf{0.0220} & \\ 
\textsf{web-spam} && 0.8551 & 0.9331 & 0.9844 & 0.7382 & 0.9918 & 0.5312 & \textbf{0.5015} & \\ 
\textsf{fb-relationship} && 0.6249 & 0.9948 & 0.5390 & 0.8392 & 0.9999 & 0.5866 & \textbf{0.4972} & \\ 
\textsf{Enzymes-123} && 0.8667 & 0.8889 &  0.5696 & 0.6364 & 0.6768 & 0.5204 & \textbf{0.3902} & \\ 
\textsf{Enzymes-279} && 0.9999 & 0.4444 &  0.5179 & 0.4444 & 0.2929 & 0.3298 & \textbf{0.2747} & \\ 
\textsf{Enzymes-293} && 1.0000 & 0.4857 &  0.9444 & 0.3793 & 0.7677 & 0.5000 & \textbf{0.3023} & \\ 
\textsf{Enzymes-296} && 1.0000 & 0.7073 & 0.9286 & 0.7344 & 0.6406 & 0.5000 & \textbf{0.3212} & \\ 
\textsf{NCI109-4008} && 0.7619 & 0.4324 & 0.8462 & 0.8235 & 0.3500 & 0.4556 & \textbf{0.3204} & \\ 
\textsf{NCI109-1709} && 0.4000 & 0.3171 & 0.1429 & 0.4615 & 0.3922 & 0.3654 & \textbf{0.1333} & \\ 
\textsf{NCI109-3713} && 0.4074 & 0.3793 & 0.7500 & 0.4583 & 0.6667 & 1.0000 & \textbf{0.2000} & \\ 
\textsf{NCI1-3700} && 0.4074 & 0.3793 & 0.7500 & 0.4583 & 0.3333 & 0.6667 & \textbf{0.2500} & \\ 
\midrule
\textsc{Avg. Rank} && 
\multicolumn{1}{c}{\textcolor{theblue}{\bf 4.59}}  &  
\textcolor{theblue}{\bf 4.77}  &  
\textcolor{theblue}{\bf 4.64}  &  
\textcolor{theblue}{\bf 4.32}  &  
\textcolor{theblue}{\bf 4.27}  &  
\textcolor{theblue}{\bf 3.27}  &  
\multicolumn{1}{c}{\textcolor{theblue}{\bf 1}}  &  
\\
\bottomrule
\end{tabularx}
\end{table}

\subsection{Clustering} \label{sec:exp-quant-clusters}
We quantitatively evaluate the proposed approach by comparing it against a wide range of state-of-the-art community detection methods on multiple heterogeneous graphs from a variety of application domains with fundamentally different structural properties~\cite{nr}. 
{\smallskip
\begin{compactenum}[$\bullet$ \leftmargin=0em]
\item \textbf{Densest Subgraph} (\textbf{DS-H})~\cite{khuller2009finding}:
This baseline finds an approximation of the densest subgraph in $G$ using degeneracy ordering~\cite{erdHos1966chromatic,rossi14coloring-networks}.
Given a graph $G$ with $n$ nodes, let $H_i$ be the subgraph induced by $i$ nodes. 
At the start, $i=n$ and thus $H_i =  G$.
At each step, node $v_i$ with smallest degree is selected from $H_i$ and removed to obtain $H_{i-1}$.
Afterwards, we update the corresponding degrees of $H_{i-1}$ and density $\rho(H_{i-1})$.
This is repeated to obtain $H_{n},H_{n-1},\ldots,H_{1}$.
From $H_{n},H_{n-1},\ldots,H_{1}$, we select the subgraph $H_k$ with maximum density $\rho(H_k)$.

\item \textbf{KCore Communities} (\textbf{KCore-H})~\cite{rossi2015pmc-sisc,shin2016corescope}:
Many have observed the maximum k-core subgraph of a real-world network to be a highly dense subgraph that often contains the maximum clique~\cite{rossi2015pmc-sisc}. 
The KCore baseline simply uses the maximum k-core subgraph as $S$ and $\bar{S} = V \setminus S$.

\item \textbf{Label Propagation} (\textbf{LP-H})~\cite{raghavan2007near}:
Label propagation takes a labeling of the graph (in this case, induced by the heterogeneous graph), then for each node in some random ordering of the nodes, the node label is updated according to the label with maximal frequency among its neighbors. This iterative process converges when each node has the same label as the maximum neighboring label, or the number of iterations can be fixed. 
The final labeling induces a partitioning of the graph and the partition with maximum modularity is selected.

\item \textbf{Louvain} (\textbf{Louv-H})~\cite{blondel2008fast}:
Louvain performs a greedy optimization of modularity by forming small, locally optimal communities then grouping each community into one node. It iterates over this two-phase process until modularity cannot be maximized locally. 
The community with maximum modularity is selected.

\item \textbf{Spectral Clustering}  (\textbf{Spec-H})~\cite{chung1997spectralbook}:
This baseline executes spectral clustering on the normalized Laplacian of the adjacency matrix to greedily build the sweeping cluster that minimizes conductance. 

\item \textbf{Untyped-Graphlet Spec. Clustering} (\textbf{GSpec-H})~\cite{benson2016higher}:
This baseline computes the untyped-graphlet adjacency matrix and executes spectral clustering on the normalized Laplacian of this matrix to greedily build the sweeping cluster that minimizes the untyped-graphlet conductance.

\end{compactenum}\smallskip}\noindent
Note that we append the original method name with $-\mathbf{H}$ to indicate that it was adapted to support community detection in arbitrary heterogeneous graphs (Figure~\ref{fig:coupled-matrix-tensor}) since the original methods were not designed for such graph data.

\begin{table}[b!]
\vspace{-2mm}
\centering
\setlength{\tabcolsep}{3.5pt}
\renewcommand{\arraystretch}{1.2} 
\scriptsize
\caption{Gain/loss achieved by TGS over the other methods. 
Overall, TGS achieves a mean improvement of 43.53x over all graph data and baseline methods.
Note the last column reports the mean improvement achieved by TGS over all methods for each graph whereas the last row reports the mean improvement achieved by TGS over all graphs for each method.}
\vspace{-3mm}
\label{table:quant-cond-gain}
\small
\fontsize{8.0}{9.0}\selectfont
\begin{tabularx}{1.0\linewidth}{@{} 
l
XX XXXXr
HH
HH
}
\toprule
& & & & & & & \textsc{Mean} \\
&  \textbf{DS} &  \textbf{KC} &  \textbf{LP} &  \textbf{Louv} &  \textbf{Spec}     &  \textbf{GSpec}   & \textsc{Gain} \\ 
\midrule
\textsf{yahoo-msg}  &  9.69x  &  11.27x  &  3.98x  &  5.59x  &  1.22x  &  3.40x  &  	\textbf{5.86x}  &  \\
\textsf{dbpedia}  &  29.78x  &  22.43x  &  18.08x  &  33.14x  &  39.01x  &  37.77x  &  	\textbf{30.03x}  &  \\
\textsf{digg}  &  1030x  &  1110x  &  1888x  &  808x  &  1.50x  &  1.00x  &  	\textbf{806.75x}  &  \\
\textsf{movielens}  &  1.81x  &  1.93x  &  1.54x  &  1.72x  &  2.00x  &  1.20x  &  	\textbf{1.70x}  &  \\
\textsf{citeulike}  &  1.38x  &  1.39x  &  1.34x  &  1.21x  &  1.39x  &  1.39x  &  	\textbf{1.35x}  &  \\
\textsf{fb-CMU}  &  1.35x  &  1.91x  &  1.98x  &  1.75x  &  1.14x  &  1.71x  &  	\textbf{1.64x}  &  \\
\textsf{reality}  &  95.24x  &  39.19x  &  29.02x  &  19.92x  &  75.34x  &  2.05x  &  	\textbf{43.46x}  &  \\
\textsf{gene}  &  5.67x  &  6.51x  &  6.40x  &  5.84x  &  2.94x  &  1.17x  &  	\textbf{4.75x}  &  \\
\textsf{citeseer}  &  15.02x  &  20.02x  &  20.42x  &  18.68x  &  1.58x  &  1.58x  &  	\textbf{12.88x}  &  \\
\textsf{cora}  &  10.00x  &  13.33x  &  17.22x  &  16.36x  &  1.74x  &  1.74x  &  	\textbf{10.07x}  &  \\
\textsf{webkb}  &  1.00x  &  4.18x  &  2.77x  &  3.89x  &  3.00x  &  1.50x  &  	\textbf{2.72x}  &  \\
\textsf{pol-retweet}  &  25.85x  &  29.51x  &  1.32x  &  4.17x  &  30.35x  &  1.91x  &  	\textbf{15.52x}  &  \\
\textsf{webkb-spam}  &  1.71x  &  1.86x  &  1.96x  &  1.47x  &  1.98x  &  1.06x  &  	\textbf{1.67x}  &  \\
\textsf{fb-relationship}  &  1.26x  &  2.00x  &  1.08x  &  1.69x  &  2.01x  &  1.18x  &  	\textbf{1.54x}  &  \\
\textsf{Enzymes-g123}  &  2.22x  &  2.28x  &  1.46x  &  1.63x  &  1.73x  &  1.33x  &  	\textbf{1.78x}  &  \\
\textsf{Enzymes-g279}  &  3.64x  &  1.62x  &  1.89x  &  1.62x  &  1.07x  &  1.20x  &  	\textbf{1.84x}  &  \\
\textsf{Enzymes-g293}  &  3.31x  &  1.61x  &  3.12x  &  1.25x  &  2.54x  &  1.65x  &  	\textbf{2.25x}  &  \\
\textsf{Enzymes-g296}  &  3.11x  &  2.20x  &  2.89x  &  2.29x  &  1.99x  &  1.56x  &  	\textbf{2.34x}  &  \\
\textsf{NCI109-g4008}  &  2.38x  &  1.35x  &  2.64x  &  2.57x  &  1.09x  &  1.42x  &  	\textbf{1.91x}  &  \\
\textsf{NCI109-g1709}  &  3.00x  &  2.38x  &  1.07x  &  3.46x  &  2.94x  &  2.74x  &  	\textbf{2.60x}  &  \\
\textsf{NCI109-g3713}  &  2.04x  &  1.90x  &  3.75x  &  2.29x  &  3.33x  &  5.00x  &  	\textbf{3.05x}  &  \\
\textsf{NCI1-g3700}  &  1.63x  &  1.52x  &  3.00x  &  1.83x  &  1.33x  &  2.67x  &  	\textbf{2.00x}  &  \\
\midrule
\textsc{Mean Gain}  &  56.89x  &  58.23x  &  91.62x  &  42.74x  &  8.24x  &  3.47x  &  	\textbf{(43.53x)}  &  \\
\bottomrule
\end{tabularx}
\end{table}

\begin{figure}[h!]
\centering
\subfigure{\includegraphics[width=0.94\linewidth]{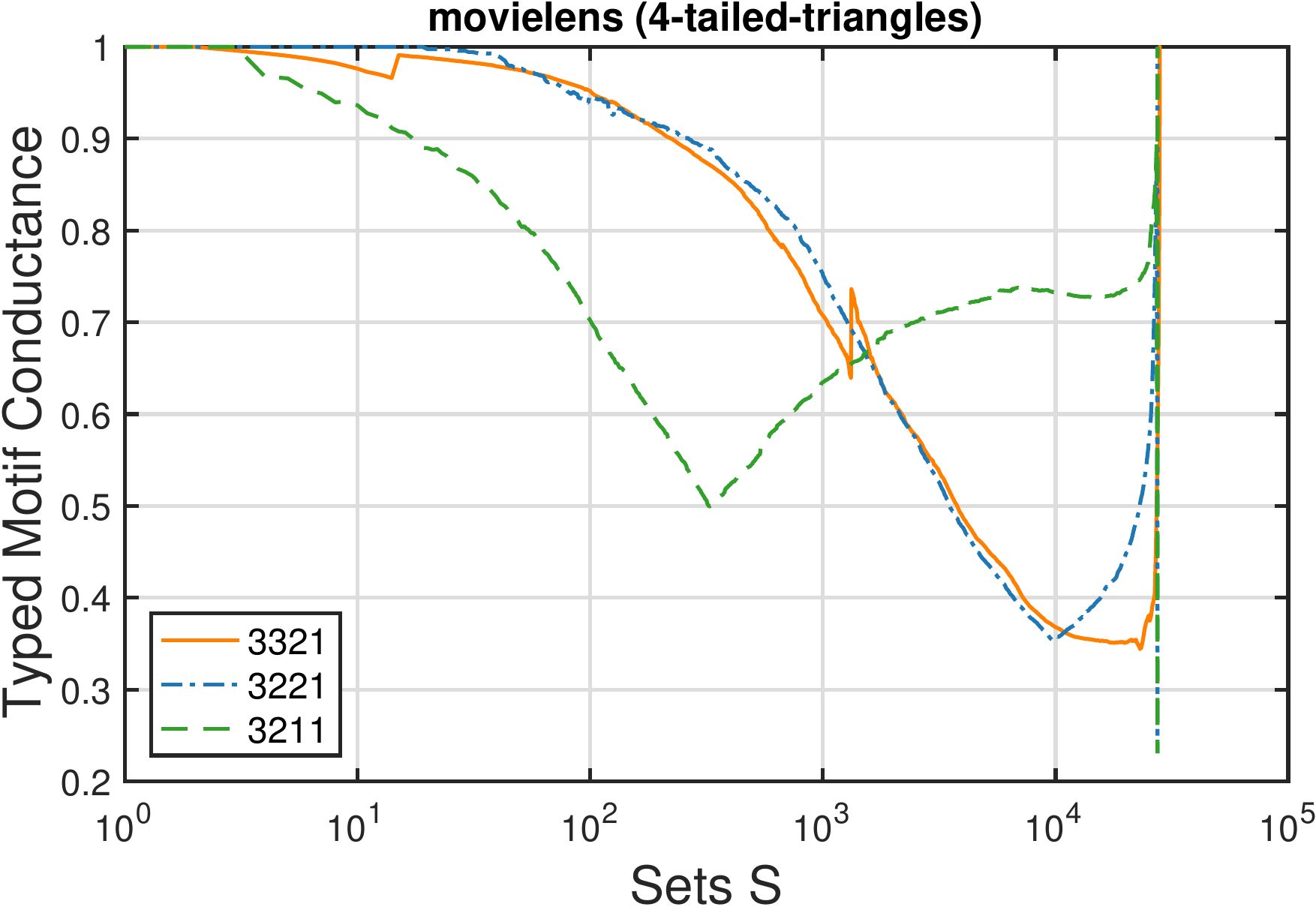}}
\subfigure{\includegraphics[width=0.94\linewidth]{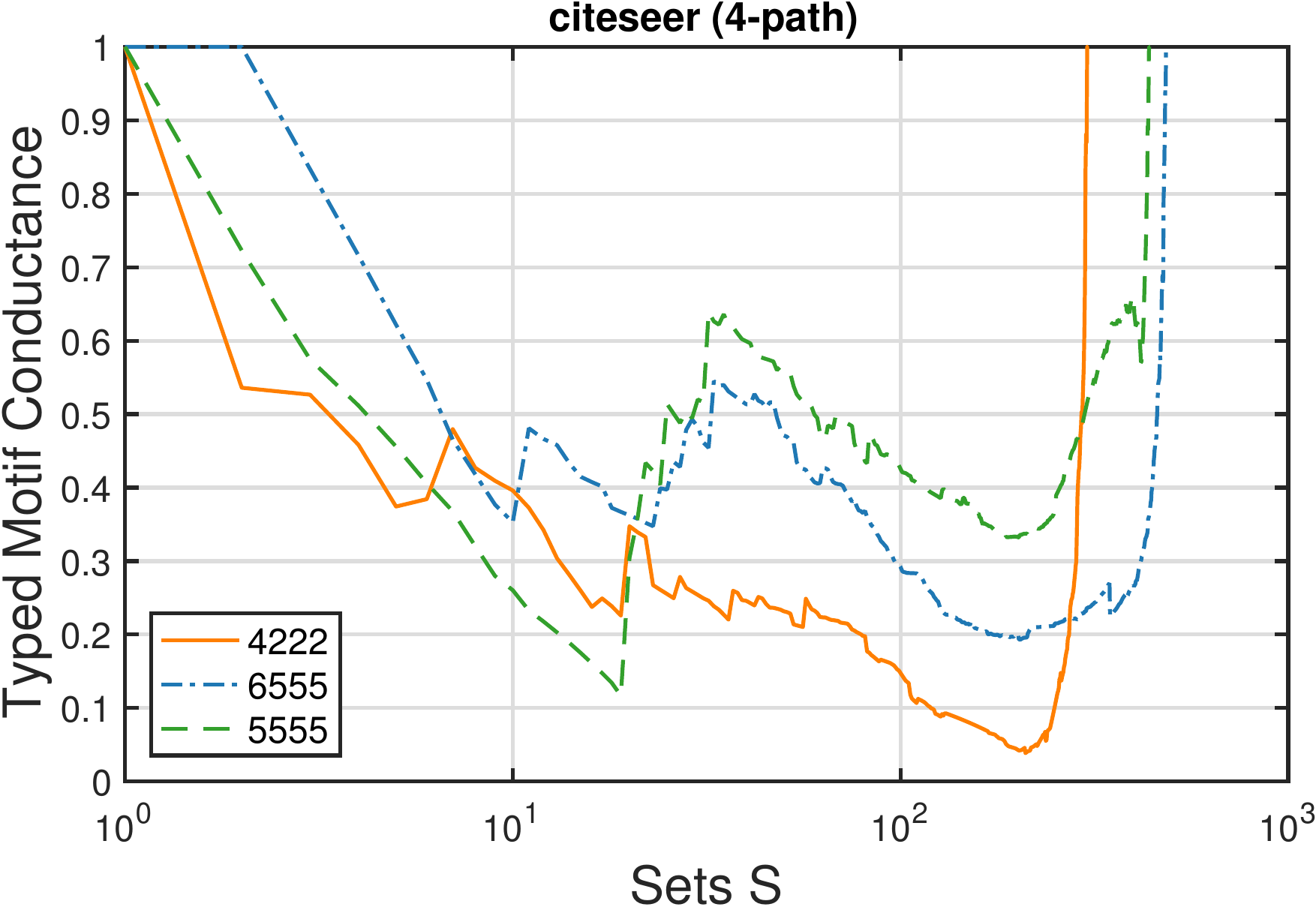}}
\caption{Typed graphlet conductance as a function of S from the sweep in Algorithm~\protect\ref{algo:spectral} for a variety of typed 4-path graphlets and typed 4-tailed-triangle graphlets. For this experiment, we consider the largest connected component for the graph derived from each typed graphlet.}
\label{fig:typed-graphlet-cond-vs-cluster-size}
\end{figure}

We evaluate the quality of communities using their external conductance score~\cite{gleich2012vertex,almeida2011there}.
This measure has been identified as one of the most important cut-based measures in a seminal survey by Schaeffer~\cite{Schaeffer2007} and extensively studied in many disciplines and applications~\cite{shi2000normalized, voevodski2009finding, chung1997spectralbook, gleich2012vertex, almeida2011there, Kannan2004, Schaeffer2007}.
Results are reported in Table~\ref{table:quant-conductance}.
As an aside, all methods take as input the same heterogeneous graph $G$.
Overall, the results in Table~\ref{table:quant-conductance} indicate that the proposed approach is able to reveal better high quality clusters across a wide range of heterogeneous graphs.
The heterogeneous network statistics and properties including the number of unique typed motifs for each induced subgraph pattern is shown in Table~\ref{table:network-stats}.

We also provide the improvement (gain) achieved by TGS clustering over the other methods in Table~\ref{table:quant-cond-gain}.
Note improvement is simply $\frac{\mathbb{E}(\mathcal{A}_i)}{\mathbb{E}(\mathcal{A}_{*})}$ where $\mathbb{E}(\mathcal{A}_i)$ is the external conductance of the solution given by algorithm $\mathcal{A}_i$ and $\mathcal{A}_{*}$ denotes the TGS algorithm.
Values less than 1 indicate that TGS performed worse than the other method whereas values $>1$ indicate the improvement factor achieved by TGS.
Overall, TGS achieves a mean improvement of $43.53$x over all graph data and baseline methods (Table~\ref{table:quant-cond-gain}).
Note the last column of Table~\ref{table:quant-cond-gain} reports the mean improvement achieved by TGS over all methods for each graph whereas the last row reports the mean improvement achieved by TGS over all graphs for each method.
Figure~\ref{fig:typed-graphlet-cond-vs-cluster-size} shows how \emph{typed graphlet conductance} (Eq.~\ref{eq:typed-graphlet-conductance}) changes as a function of community size $|S|$ for three different typed-graphlets.

\subsection{Link Prediction in Heterogeneous Graphs} \label{sec:exp-link-pred}
This section quantitatively demonstrates the effectiveness of TGS for link prediction.

\subsubsection{Higher-order Typed-Graphlet Embeddings}
In Section \ref{sec:exp-quant-clusters} we used the approach for higher-order clustering and quantitatively evaluated the quality of them.
In this section, we use the approach proposed in Section~\ref{sec:framework} to derive higher-order typed-graphlet node embeddings and quantitatively evaluate them for link prediction.
Algorithm~\ref{alg:higher-order-typed-graphlet-embeddings} summarizes the method for deriving higher-order typed motif-based node embeddings (as opposed to clusters/partitions of nodes, or an ordering for compression/analysis, see Section~\ref{sec:exp-graph-compression}).
In particular, given a typed-graphlet adjacency matrix, Algorithm~\ref{alg:higher-order-typed-graphlet-embeddings} outputs a $N \times D$ matrix $\mZ$ of node embeddings.
For graphs with many connected components, Algorithm~\ref{alg:higher-order-typed-graphlet-embeddings} is called for each connected component of $G^{H}$ and the resulting embeddings are stored in the appropriate locations in the overall embedding matrix $\mZ$.

\renewcommand{\GraphletFigScale}{0.05}
\begin{table}[h!]
\centering
\renewcommand{\arraystretch}{1.1} 
\caption{
Link prediction edge types and semantics.
We bold the edge type that is predicted by the models.
}
\label{table:link-pred-network-data}
\vspace{-3mm}
\fontsize{7.5}{8.5}\selectfont
\begin{tabularx}{1.0\linewidth}{l       HHHHcH    HHHHHHHH  HHHH Hl HHHHH@{}}
\toprule
\textbf{Graph}  &   
$|V|$  &  $|E|$  && & $|\mathcal{T}_V|$ &  &
\includegraphics[scale=0.8]{fig2.pdf} &
\includegraphics[scale=0.8]{fig3.pdf} &
\includegraphics[scale=0.15]{fig4.pdf} &
\includegraphics[scale=0.8]{fig5.pdf} &
\includegraphics[scale=0.8]{fig6.pdf} &
\includegraphics[scale=0.15]{fig7.pdf} &
\includegraphics[scale=0.14]{fig8.pdf} &
\includegraphics[scale=0.8]{fig9.pdf} &
&&&&
\textbf{Predicted Edge Type} &
\textbf{Heterogeneous Edge Types} &
\\
\midrule

\textsf{movielens} & 28.1k & 170.4k & && 3 & 3 & 
7 & 1 & 6 & 9 & 6 & 3 & 3 & 0 & 
&&&& 
& 
\textbf{user-by-movie}, user-by-tag & 
\\ 

&&&&&&&
&&&&&&&&
&&&&
&
tag-by-movie &
\\

\textsf{dbpedia} &  495.9k & 921.7k & && 4 & 3 &
8 & 0 & 6 & 10 & 5 & 0 & 0 & 0 & 
&&&& 
&
\textbf{person-by-work} (produced work),  
\\

&&&&&&&
&&&&&&&&
&&&&
&
person-has-occupation, & 
\\

&&&&&&&
&&&&&&&&
&&&&
&
work-by-genre (work-associated-with-genre) & 
\\

\textsf{yahoo-msg} & 100.1k & 739.8k & && 2 & 2 & 
3 & 2 & 3 & 4 & 3 & 3 & 3 & 2 & 
&&&& 
user-user communication & 
\textbf{user-by-user} (communicated with), &
\\ 

&&&&&&&
&&&&&&&&
&&&&
&
user-by-location (communication location) &
\\

\bottomrule
\end{tabularx}
\end{table}

\subsubsection{Experimental Setup}
We evaluate the higher-order typed-graphlet node embedding approach (Algorithm~\ref{alg:higher-order-typed-graphlet-embeddings}) against the following methods: DeepWalk (DW)~\cite{deepwalk}, LINE~\cite{line}, GraRep~\cite{grarep}, spectral embedding (untyped edge motif)~\cite{Ng2002}, and spectral embedding using untyped-graphlets.
All methods output ($D$=128)-dimensional node embeddings $\mZ = \big[\, \vz_1 \cdots \vz_n \,\big]^T$ where $\vz_i \in \RR^{D}$.
For DeepWalk (DW)~\cite{deepwalk}, we perform 10 random walks per node of length 80 as mentioned in~\cite{node2vec}.
For LINE~\cite{line}, we use 2nd-order proximity and perform 60 million samples.
For GraRep (GR)~\cite{grarep}, we use $K=2$.
In contrast, the spectral embedding methods do not have any hyperparameters besides $D$ which is fixed for all methods.
As an aside, all methods used for comparison were modified to support heterogeneous graphs (similar to how the other baseline methods from Section~\ref{sec:exp-quant-clusters} were modified).
In particular, we adapted the methods to allow multiple graphs as input consisting of homogeneous or bipartite graphs that all share at least one node type (See Table~\ref{table:link-pred-network-data} and Figure~\ref{fig:coupled-matrix-tensor}) and from these graphs we construct a single large graph by simply ignoring the node and edge types and relabeling the nodes to avoid conflicts.

\subsubsection{Comparison}
Given a partially observed graph $G$ with a fraction of missing/unobserved edges, the link prediction task is to predict these missing edges.
We generate a labeled dataset of edges.
Positive examples are obtained by removing $50\%$ of edges uniformly at random, whereas \emph{negative examples} are generated by randomly sampling an equal number of node pairs $(i,j) \not\in E$. 
For each method, we learn embeddings using the remaining graph.
Using the embeddings from each method, we then learn a logistic regression (LR) model to predict whether a given edge in the test set exists in $E$ or not.
Experiments are repeated for 10 random seed initializations and the average performance is reported.
All methods are evaluated against four different evaluation metrics including $F_1$, Precision, Recall, and AUC.

\begin{table}[h!]
\centering
\renewcommand{\arraystretch}{1.2}
\caption{Link prediction results.}
\label{table:link-pred-results}
\vspace{-3mm}
\fontsize{7.5}{8.5}\selectfont
\begin{tabularx}{1.0\linewidth}{@{}ll XXXXX X@{}H HHHH@{}}
\toprule
&&
\textbf{DW}   &   \textbf{LINE}   &   \textbf{GR}   &  \textbf{Spec} &  \textbf{GSpec} &  \textbf{TGS} && \\ 
\midrule

\multirow{4}{*}{\rotatebox{70}{\textbf{\sf movielens}}}  
&   $\mathbf{F_1}$          &   	0.8544   &   0.8638       &   0.8550        &   0.8774   &   0.8728   &   \textbf{0.9409}   &   \\
&   \textbf{Prec.}          &   	0.9136   &   0.8785       &   0.9235        &   0.9409   &   0.9454   &   \textbf{0.9747}   &   \\
&   \textbf{Recall}         &   	0.7844   &   0.8444       &   0.7760        &   0.8066   &   0.7930   &   \textbf{0.9055}   &   \\
&   \textbf{AUC}            &   	0.9406   &   0.9313       &   0.9310        &   0.9515   &   0.9564   &   \textbf{0.9900}   &   \\
\midrule

\multirow{4}{*}{\rotatebox{70}{\textbf{\sf dbpedia}}} 
&   $\mathbf{F_1}$          &   	0.8414   &   0.7242   &   0.7136       &   0.8366   &   0.8768   &   \textbf{0.9640}   &   \\
&   \textbf{Prec.}          &   	0.8215   &   0.7754   &   0.7060       &   0.7703   &   0.8209   &   \textbf{0.9555}   &   \\
&   \textbf{Recall}         &   	0.8726   &   0.6375   &   0.7323       &   0.9669   &   0.9665   &   \textbf{0.9733}   &   \\
&   \textbf{AUC}            &   	0.8852   &   0.8122   &   0.7375       &   0.9222   &   0.9414   &   \textbf{0.9894}   &   \\
\midrule

\multirow{4}{*}{\rotatebox{70}{\textbf{\bf \sf yahoo}}}
&   $\mathbf{F_1}$          &   	0.6927   &   0.6269   &   0.6949   &   0.9140    &   0.8410  &   \textbf{0.9303}   &     &   \\
&   \textbf{Prec.}          &   	0.7391   &   0.6360  &   0.7263   &   0.9346   &   0.8226   &   \textbf{0.9432}   &    &   \\
&   \textbf{Recall}         &   	0.5956   &   0.5933   &   0.6300   &   0.8904   &   0.8699   &   \textbf{0.9158}   &   &   \\
&   \textbf{AUC}            &   	0.7715   &   0.6745   &   0.7551   &   0.9709   &   0.9272   &   \textbf{0.9827}   &    &   \\

\bottomrule
\multicolumn{9}{l}{\footnotesize $^{\star}$ Note DW=DeepWalk and GR=GraRep.}\\
\end{tabularx}
\end{table}

For link prediction~\cite{linkpred:liben07,role2vec}, entity resolution/network alignment, recommendation and other machine learning tasks that require edge embeddings (features)~\cite{deepGL}, we derive edge embedding vectors by combining the learned node embedding vectors of the corresponding nodes using an edge embedding function $\Phi$.
More formally, given $D$-dimensional embedding vectors $\vz_i$ and $\vz_j$ for node $i$ and $j$, 
we derive a $D$-dimensional edge embedding vector 
$\vz_{ij} = \Phi(\vz_i, \vz_j)$
where $\Phi$ is defined as one of the following \emph{edge embedding functions}:
\begin{align}\label{eq:edge-embedding-ops}\nonumber
\Phi \,\in\, 
\Bigg\lbrace
\frac{\vz_i + \vz_j}{2},\;
\vz_i \odot \vz_j,\;
\abs{\vz_i - \vz_j},\;
(\vz_i - \vz_j)^{\circ 2},\,
\max(\vz_i, \vz_j),\,
\vz_i + \vz_j
\Bigg\rbrace
\end{align}\noindent
Note $\vz_i \odot \vz_j$ is the element-wise (Hadamard) product, $\vz^{\circ 2}$ is the Hadamard power, and $\max(\vz_i, \vz_j)$ is the 
element-wise max.

\begin{figure*}[h!]
\centering
\subfigure[Spectral clustering (untyped edge)]{\includegraphics[width=0.30\linewidth]{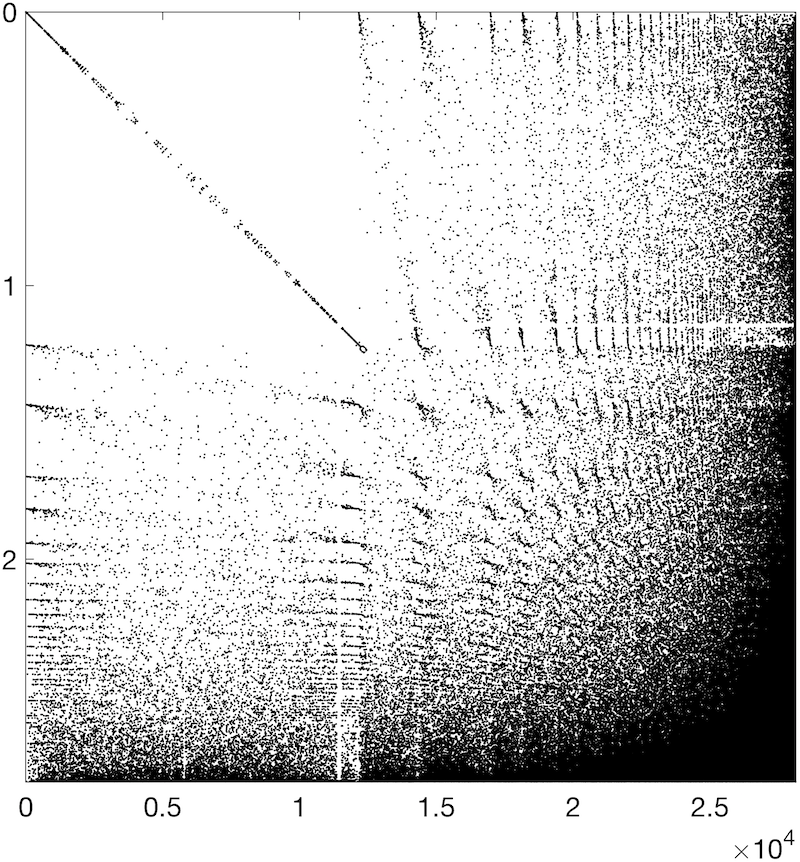}\label{fig:spectral-spy-plot-movielens}}
\hfill
\subfigure[Untyped 3-path]{\includegraphics[width=0.30\linewidth]{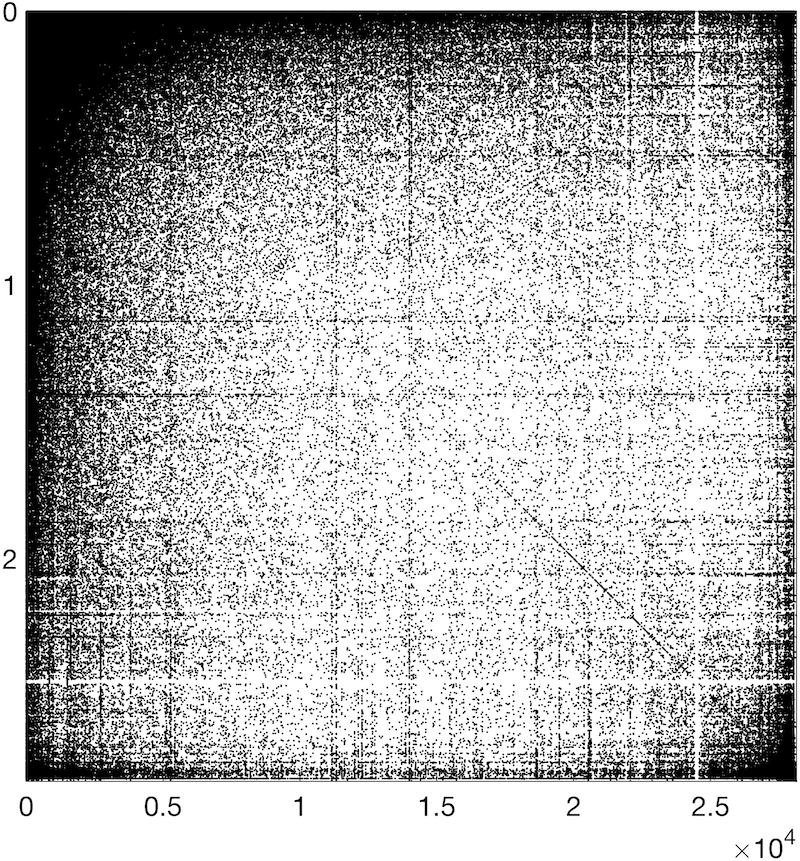}\label{fig:untyped-3path-spy-plot-movielens}}
\hfill
\subfigure[Typed 4-cycle ]{\includegraphics[width=0.334\linewidth]{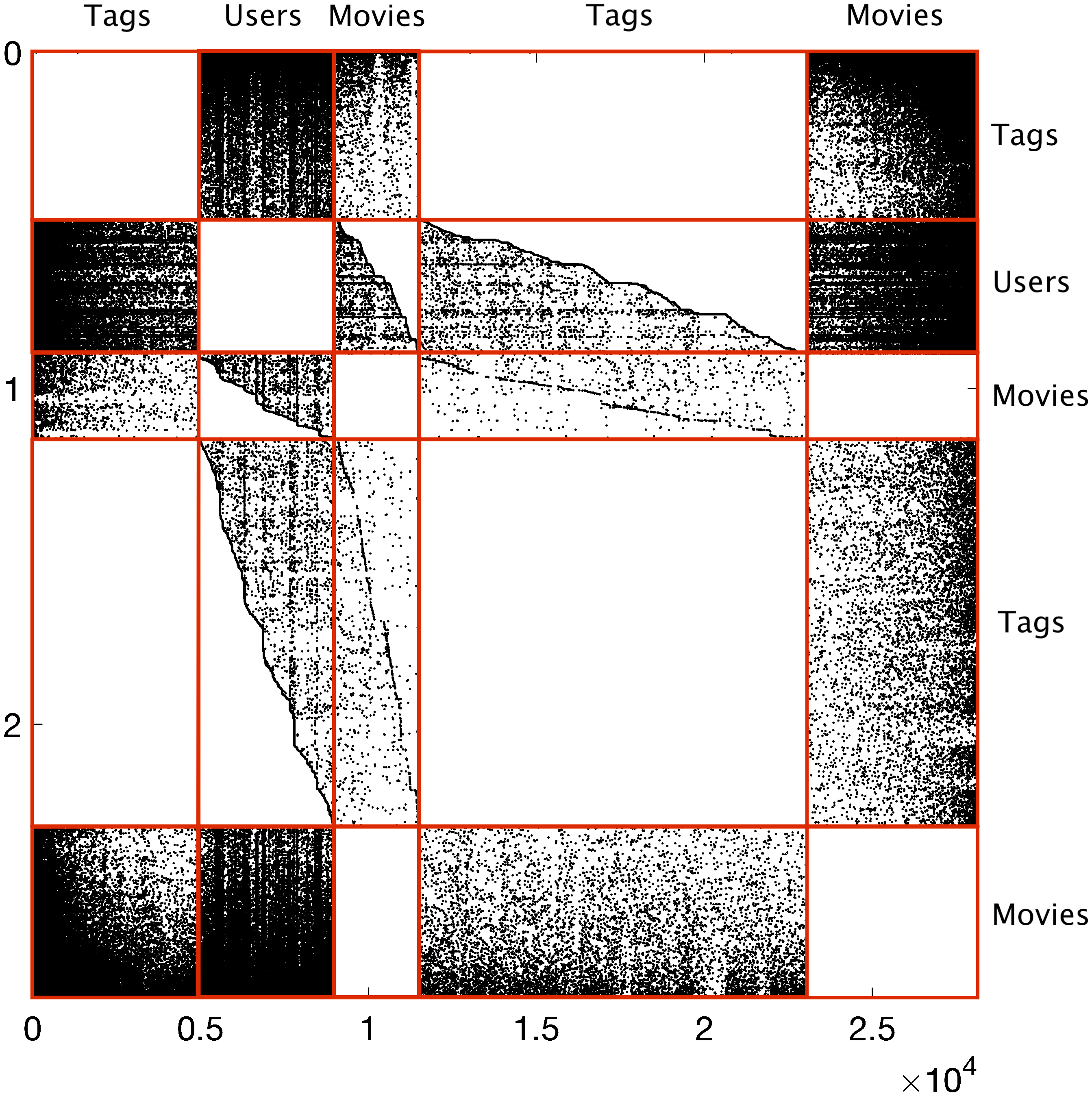}
\label{fig:typed-4cycle-spy-plot-movielens}}

\vspace{-2mm}
\caption{Typed graphlet-based spectral ordering achieves significant compression 
by partitioning users, tags, and movies (from the movielens data) into homogeneous groups that are either nearly fully-connected (near-clique) or fully-disconnected.
Strikingly, TGS partitions the rows/columns according to types without explicitly leveraging types (\ie, types are not used when deriving the typed-graphlet spectral ordering).
For instance, the first $\approx$5k rows/columns correspond to tags, whereas the following $\approx$4k rows/columns are users, and so on.
This is in contrast to the other methods where the rows/columns of different types are mixed with one another in a seemingly random fashion.
Moreover, these approaches fail to partition the graph into homogeneous groups that are dense or completely empty.
The typed 4-cycle graphlet used above consists of 2 nodes representing movies and the other two representing tags assigned to the movies.
Other typed-graphlets gave other interesting results with comparable compression.}
\label{fig:spy-motif-adj-vs-others-movielens}
\end{figure*}

Table~\ref{table:link-pred-network-data} summarizes the heterogeneous network data used for link prediction.
In particular, the types used in each of the heterogeneous networks are shown in Table~\ref{table:link-pred-network-data} as well as the specific types involved in the edges that are predicted (\eg, the edge type being predicted).
The results are provided in Table~\ref{table:link-pred-results}.
Results are shown for the best edge embedding function.
In Table~\ref{table:link-pred-results}, TGS is shown to outperform all other methods across \emph{all} four evaluation metrics.
In all cases, the higher-order typed-graphlet spectral embedding outperforms the other methods (Table~\ref{table:link-pred-results}) with an overall mean gain (improvement) in $F_1$ of 18.7\% (and up to 48.4\% improvement) across all graph data.
In terms of AUC, TGS achieves a mean gain of 14.4\% (and up to 45.7\% improvement) over all methods.
We posit that an approach similar to the one proposed in~\cite{HONE} could be used with the typed-graphlet node embeddings to achieve even better predictive performance.
This approach would allow us to leverage multiple typed-graphlet Laplacian matrices for learning more appropriate higher-order node embeddings.

\subsection{Graph Compression} \label{sec:exp-graph-compression}
In Section \ref{sec:exp-quant-clusters} we used the approach for higher-order clustering whereas Section~\ref{sec:exp-link-pred} demonstrated the effectiveness of the approach for link prediction.
However, the framework can be leveraged for many other important applications including graph compression~\cite{bvgraph,boldi2011layered,chierichetti2009compressing,rossi2018compressing-graphs-cliques,liakos2014pushing}.
In this section, we explore the proposed approach for graph compression.
Compression has two key benefits.
First, it reduces the amount of IO traffic~\cite{rossi2018compressing-graphs-cliques}.
Second, it can speed up existing algorithms by reducing the amount of work required~\cite{karande2009speeding}.
Graph compression methods rely on a ``good'' ordering of the vertices in the graph to achieve a good compression~\cite{bvgraph,boldi2011layered}.

\begin{table}[h!]
\centering
\renewcommand{\arraystretch}{1.2}
\caption{Graph compression results. Size in bytes required to store heterogeneous graphs using the bvgraph compression scheme with different orderings.}
\label{table:graph-compression}
\vspace{-3mm}
\fontsize{7.5}{8.5}\selectfont
\begin{tabularx}{1.0\linewidth}{@{}l XXXX HH X@{}H HHHH@{}}
\toprule
& \multicolumn{4}{c}{\sc Bytes} && \textsc{Mean}\\
\cmidrule(l{5pt}r{10pt}){2-5}

\textsc{Graph} &  \textbf{Native} &  \textbf{Spec} &  \textbf{GSpec} &  \textbf{TGS} &&& \textsc{Gain} \\ 
\midrule
\textsf{movielens}  &  585588  &  471246  &  464904  &  444252  &&& \textbf{14.18\%}  &  \\
\textsf{yahoo-msg}  &  3065499  &  2694151  &  2708700  &  2427325  &&& \textbf{16.29\%}  &  \\
\textsf{dbpedia}  &  4800584  &  3520721  &  3469878  &  3111728  &&& \textbf{26.31\%}  &  \\
\textsf{digg}  &  15989475  &  10462874  &  10296144  &  9677741  &&& \textbf{26.57\%}  &  \\
\bottomrule
\end{tabularx}
\end{table}

In this work, we order the vertices by the \emph{typed-graphlet spectral ordering} introduced previously in Definition~\ref{def:typed-graphlet-spectral-ordering}.
Notice in this case, the output of our approach is the typed-graphlet spectral ordering (Definition~\ref{def:typed-graphlet-spectral-ordering}) as opposed to clusters (Section~\ref{sec:exp-quant-clusters}) or node embeddings (Section~\ref{sec:exp-link-pred}).
We then evaluate how well the bvgraph~\cite{bvgraph} compression method reduces the graph size using this ordering.
Given an ordering, we permute the graph to use this ordering and use the bvgraph compression algorithm~\cite{bvgraph} with all the default settings to compress the networks.
Results are reported in Table~\ref{table:graph-compression} for four large heterogeneous graphs.
We compare the compression obtained by reporting the size of each heterogeneous graph in bytes after compression.
We evaluate four orderings of the vertices: the native order, spectral ordering (untyped edge), untyped-graphlet ordering and the typed-graphlet spectral ordering proposed in this work.
For untyped-graphlet and typed-graphlet spectral ordering we report the best result given by an ordering from any untyped or typed-graphlet.
We find that the typed-graphlet ordering results in better compression across all other methods and graphs. 
Overall, typed-graphlet spectral ordering achieves a mean improvement of $20.8\%$ over all graphs and all orderings.

In addition to the quantitative compression results shown in Table~\ref{table:graph-compression}, we use the proposed ordering for exploratory analysis.
In particular, we use the orderings to permute the rows/columns of the original adjacency matrix and visualize the nonzero structure of the resulting matrices in Figure~\ref{fig:spy-motif-adj-vs-others-movielens}.
Using the ordering from TGS (Definition~\ref{def:typed-graphlet-spectral-ordering}) gives rise to partitions (sub-matrices) that are significantly more homogeneous (completely connected or disconnected) than the other methods as shown in Figure~\ref{fig:spy-motif-adj-vs-others-movielens} and thus are able to achieve a better compression as shown quantitatively in Table~\ref{table:graph-compression}.
Furthermore, TGS is able to uncover the type of the nodes by grouping nodes into partitions based on their types (users, movies, tags).
Moreover, the partitions are also meaningful as they partition movies and the tags used to describe those movies into genres. 
This allows us to understand the tags that best describe that genre as well as the movies from that genre that align with those tags.
Other typed-graphlet spectral orderings from different typed graphlets were removed due to space, though many of them also gave interesting and explainable block partitions as well.

\renewcommand{\subsection}[1]{\smallskip\noindent\textbf{#1}.}
\section{Related Work} \label{sec:related-work}

\subsection{\textit{Community Detection in Homogeneous Graphs}} \label{sec:related-work-homo}
Most research in community detection has traditionally focused on homogeneous graphs \cite{Bothorel2015}. This problem has been extensively researched as evidenced by the multiple survey papers \cite{Schaeffer2007, Fortunato2010, Newman2011, Coscia2011, Malliaros2013, Fortunato2016} and empirical comparisons of algorithms \cite{Arenas2005, Lancichinetti2009, Leskovec2010, Harenberg2014} on this topic.
Many works have focused on community detection techniques using modularity-based optimization.
Modularity was introduced in the seminal paper \cite{Newman2004} as a quantitative measure for assigning scores to a community structure. It became a standard measure for comparing clustering algorithms and suggested a framework for community detection as an optimization task of a quality function. Modularity maximization is an NP-hard problem, but there exist efficient heuristics such as greedy methods, semidefinite programming, simulated annealing, and spectral methods \cite{Newman2011}. Nevertheless, it suffers from many drawbacks such as runtime dependence on the size of graph \cite{Fortunato2016}, resolution limit \cite{Fortunato2007}, and nonuse of between-community connectivity information \cite{Newman2011}.

Graph conductance is another very popular community quality function~\cite{chung1997spectralbook, Kannan2004}.
Computing the conductance of a graph is an NP-hard problem as well \cite{vsima2006np}, but there exist spectral methods that give good, \emph{theoretically-supported} approximations~\cite{chung1997spectralbook, Leighton1999, Ng2002, Verma2003, Kannan2004} and have only weak runtime dependence on the size of the graph since there exist fast methods for computing eigenvalues \cite{Golub2012}. 
Moreover, conductance takes into account the internal and external connectedness of a community \cite{Fortunato2016}.
As an aside, existing higher-order clustering methods that extend modularity~\cite{Gomez2008} and conductance~\cite{benson2016higher,Tsourakakis2017} are all designed for \emph{homogeneous graphs} with a single node/edge type \emph{and} are also based on the existing notion of \emph{untyped graphlets}. 
However, we discuss these methods along side other higher-order methods that leverage untyped graphlets.
In this work, we propose a higher-order clustering framework that generalizes to heterogeneous graphs.
However, since homogeneous graphs are a special case of heterogeneous graphs (where nodes/edges have a single type), the proposed framework can be used for higher-order clustering in homogeneous graphs as well.

\subsection{\textit{Community Detection in Heterogeneous Graphs}} \label{sec:related-work-heter}
Recently, researchers have started to extend community detection methods for multi-relational, multi-typed graphs \cite{Bothorel2015, Shi2017}. In the literature, these graphs are referred to as \emph{heterogeneous graphs} or \emph{heterogeneous information networks} \cite{Shi2017}. In recent years, many methods have been proposed for community detection in heterogeneous networks in ways that consolidate both structural and compositional information.
Weight modification methods reduce a heterogeneous graph to a weighted homogeneous graph through an edge-weighting function based on node types. Afterwards, any homogeneous community detection algorithm can be applied to this modified graph. For example, \cite{Neville2003} and \cite{Steinhaeuser2008} use a matching coefficient function that quantifies the number of similar node types. Under a certain viewpoint, our method can be classified under this category of algorithms. We discuss this later.

A different paradigm for combining both structural and compositional information of a network would be to take the opposite approach of weight modification. One could transform a heterogeneous graph to a point cloud by converting structural information coupled with node type data into a node distance function ~\cite{Shi2017}. Then, any distance-based clustering method such as k-means can be applied on this point cloud. This approach incorporates both structure and composition of the network. Linear combination methods take a linear combination of type similarity and structural similarity functions as proposed in \cite{Combe2012}. Walk strategies on heterogeneous graphs have also been used to compute vertex distance functions. The work in \cite{Zhou2009} defines a random walk on a heterogeneous graph such that more paths---alongside the paths from the network structure alone---exist between nodes of the same type, thus measuring vertex proximity with two modes of data. Another distance function based on breadth-first search is proposed in \cite{Ge2008} that uses the node types to determine the next visited node, and thus the distance.

These similarity reduction methods have a feature that nodes that are structurally far from each other but share similar attributes may become close after this modification \cite{Bothorel2015}. As a consequence, clusters may contain disconnected portions of the graph which is generally not seen as a characteristic of communities. Using motif-based clustering in our work allows us to preserve this connectedness property.
Another standard homogeneous clustering approach that has been extended to heterogeneous graphs is statistical inference such as generative models \cite{Liu2009}, stochastic block models \cite{Balasubramanyan2011}, and Bayesian inference \cite{Xu2012}. 

In this work, we develop a principled framework for \emph{higher-order clustering in heterogeneous graphs}.
Furthermore, while most existing methods for heterogeneous graphs lack a sound theoretically grounded framework, we rigorously prove mathematical guarantees on the optimality of the higher-order clustering obtained from the framework.

\subsection{\textit{Graphlets}} \label{sec:related-work-graphlets}
Graphlets (network motifs) were first introduced in \cite{Shen-Orr2002, Milo2002} to study the structural design principles of single-typed biological networks.
In that work, graphlets were found to be the fundamental building blocks of complex homogeneous (single-typed) networks.
Various algorithms have been developed to count the occurences of all graphlets up to a given size on the nodes~\cite{rage} and edges~\cite{pgd, pgd-kais} of a graph.
Motif discovery algorithms are limited in that they are computationally expensive for larger motifs and they search for motifs operating in isolation. 
In \cite{Knabe2008}, it was shown that network context, \ie, the connections of the motif to the rest of the network, is important in inferring the functionality of a motif. 
Motifs have recently been used in other higher order-network analysis methods such as role discovery \cite{Rossi2015}, network embeddings \cite{Rossi2018a}, inductive network representation learning~\cite{deepGL}, 
and temporal network analysis \cite{Paranjape2017}. 

Motifs were first used for community detection in \cite{Gomez2008}.
In that work, motif modularity was introduced as a generalization of the standard notion of modularity.
Once motif modularity is defined, their method essentially becomes a modularity maximization problem. 
As mentioned above, modularity-based methods ignore any between-community connectivity information. 
Therefore, the connectivity of the communities to the rest of the network is not considered in the motif-based modularity method, and thus we lose information that may be of value in correctly capturing useful community structure. 
Moreover, this method still suffers from the resolution limit \cite{Fortunato2007} and requires longer computation.
More recently, \cite{benson2016higher,Tsourakakis2017} extended the definition of graph conductance based on the existing notion of untyped-motifs for homogeneous graphs.
This definition is a special case of the proposed framework when untyped graphlets are used and the graph is homogeneous.

Previous work has focused entirely on \emph{untyped motifs/graphlets}~\cite{pgd,pgd-kais,ahmed16bigdata,Rossi2018b}.
In this work, we introduce the generalized notion of \emph{typed graphlets} and use this more powerful representation as a basis for higher-order clustering.
Typed graphlets generalize the notion of graphlets to rich heterogeneous networks as they explicitly capture the higher-order typed connectivity patterns in such networks.
Using this more appropriate and general notion, we develop a principled general higher-order clustering framework by introducing  \emph{typed-graphlet conductance} that generalizes the traditional conductance to higher-order structures in heterogeneous graphs.
Recall that homogeneous, labeled, signed, and attributed graphs are all special cases of heterogeneous graphs.
The framework provides mathematical guarantees on the optimality of the higher-order clustering obtained.
The theoretical results extend to typed graphlets of arbitrary size and avoids restrictive special cases required in prior work. 
In addition, existing work on higher-order motif-based methods have focused entirely on simple homogeneous graphs whereas our work focuses on rich heterogeneous networks with an arbitrary number of node and edge types (Figure~\ref{fig:coupled-matrix-tensor}).
Furthermore, while previous work on higher-order clustering was designed for homogeneous graphs \emph{and} untyped-graphlets, they also focused only on community detection whereas this work also leverages the proposed framework for deriving higher-order embeddings and graph compression based on the typed-graphlet spectral ordering.

\section{Conclusion}\label{sec:conc}
This work proposed a general framework for higher-order spectral clustering in heterogeneous graphs.
The framework explicitly incorporates heterogeneous higher-order information by counting typed graphlets that leverage node and edge types.
It is shown that typed-graphlets generalize the notion of graphlets to rich heterogeneous networks and that these explicitly capture the higher-order typed connectivity patterns in such networks.
Using these as a basis, we proposed the notion of \emph{typed-graphlet conductance} that generalizes the notion of conductance to higher-order structures in heterogeneous graphs.
Typed-graphlet conductance minimization, for a given typed graphlet, provides a cut in the heterogeneous graph that preserves instances of the typed graphlet in a balanced manner. 

The framework provides mathematical guarantees on the optimality of the higher-order clustering obtained.
The theoretical results extend to typed graphlets of arbitrary size and avoids restrictive special cases required in prior work.
The framework unifies prior work and serves as a basis for analysis of higher-order spectral clustering methods.
It was shown that spectral clustering and untyped-graphlet spectral clustering are special cases in the proposed framework.
The experiments demonstrated the effectiveness and utility of the proposed framework for three important tasks including
(i) clustering,
(ii) predictive modeling, and 
(iii) graph compression.
For these tasks, the approach was shown to outperform other state-of-the-art methods with a significant improvement in all cases.
The approach achieves an overall improvement in $F_1$ and AUC of 18.7\% and 14.4\% for link prediction whereas for graph compression it achieves a mean improvement of $20.8\%$ across all graphs and methods.
Finally, typed-graphlet spectral clustering is shown to uncover better clusters than state-of-the-art methods with a mean improvement of 43x over all graphs and methods.

\balance
\bibliographystyle{ACM-Reference-Format}
\bibliography{paper}

\end{document}